%% file: Revisiting_the_invariant_ring_of_two-qubit_mixed_states.tex
\documentclass[11pt]{article}
\input{qudit_preamble.tex}

\begin{document}

\title{\bf Revisiting the invariant ring of two-qubit mixed states}

\author{\blue{Bing Xie}\footnote{E-mail: xiebingjiangxi2023@163.com}\quad and\quad \blue{Lin Zhang}\footnote{Contact author: godyalin@163.com}\\
  {\it\small School of Science, Hangzhou Dianzi University, Hangzhou 310018, PR~China}}
\date{}
\maketitle

\begin{abstract}
Local unitary equivalence serves as the cornerstone for classifying
entanglement in bipartite quantum systems. Mathematically, it
reduces to the study of polynomial invariants of the density matrix
under the action of local unitary groups. The collection of all such
polynomial invariants forms a ring, known as the invariant ring.
However, identifying the complete generators of the invariant ring
is the central issue. In 2007, for the two-qubit system, King
\emph{et al} fully characterized the structure of the invariant ring
and determined its Cohen--Macaulay decomposition. In this paper, we
revisit their work, with a focus on the computation of the Molien
series and the construction of invariants. On one hand, we
rigorously derive the Molien series via explicit contour integration
over the maximal torus, filling in all previously omitted
computational steps. On the other hand, we systematically construct
all invariants using a graphical method, and then reduce the
candidate set by applying various identities and algebraic
relations, obtaining a generating set consisting of 21 invariants.
This paper aims to make this important result more widely accessible
to researchers in quantum information and invariant theory through
the above discussions.
\end{abstract}

\newpage

\section{Introduction}

One of the most striking features of quantum mechanics is its
inherent nonlocal correlations. In a composite system, any local
unitary (LU) transformation merely changes the local bases of the
subsystems. It can neither eliminate nor create nonlocal
correlations, nor alter their strength. As a central resource in
quantum information processing, nonlocal correlations profoundly
reflect the internal structure of composite quantum systems.
Consequently, any valid measure of entanglement or more general
quantum correlations must be invariant under LU transformations
\cite{Schlienz1995}. In general, functions that remain unchanged
under LU operations are called LU invariants \cite{Zhang2026}. In
this paper, we focus specifically on polynomial invariants in the
density matrix elements, or equivalently their Bloch parameters, of
the quantum state. The set of all such polynomial invariants forms a
ring, known as the invariant ring. Systematically constructing a set
of complete generators of all polynomial invariants and revealing
their algebraic relations is, in essence, equivalent to fully
characterizing the structure of this ring \cite{Turner2017}.

It is noteworthy that for the two-qubit system, the complete
construction of LU invariants and the problem of state equivalence
admit fully explicit solutions. As the simplest nontrivial composite
quantum system, the two-qubit model has been extensively
investigated from multiple perspectives. Early work by Grassl
\emph{et al} \cite{Grassl1998} established a connection between
polynomial invariants and representations of the permutation group,
reducing the construction of invariants to the enumeration of
trivial subrepresentations in tensor product representations. Within
this framework, they constructed a candidate generating set
consisting of 21 invariants and proved that this set generates all
invariants up to degree 23. However, it remained unclear whether
these 21 polynomials could distinguish all inequivalent two-qubit
states, and whether they could generate the entire invariant ring.

Subsequently, Makhlin~\cite{Makhlin2002} proved that a subset of 18
polynomial invariants is complete, in the sense that two two-qubit
mixed states are LU equivalent if and only if these 18 invariants
take same values on two states. This subset serves only to
distinguish equivalence classes of quantum states and cannot
generate all higher-degree polynomial invariants. Meanwhile, King
\emph{et al} \cite{King2007} systematically studied such problem
using graphical methods, obtaining the 21 generators of the
invariant ring. They further identified all 63 syzygy relations
among these generators, thereby determining the Cohen--Macaulay
structure of the invariant ring for the two-qubit system. This
confirmed that the 21 invariants conjectured by Grassl \emph{et al}
suffice to generate the entire ring. Parallel frameworks have also
been established, including Lie-algebraic PDE constructions
\cite{Albeverio2007} and analyzes for constrained or subclass states
\cite{Gerdt2010,Gerdt2016}. On the experimental aspect, Zhang
\emph{et al} \cite{Zhang2025} established a correspondence between
the Makhlin invariants and LU Bargmann invariants, enabling the
measurement of these invariants through experimental schemes such as
cycle tests \cite{Quek2024}.

Although the works mentioned above address the construction of LU
invariants from various perspectives, the work of King \emph{et al}
is the most complete in terms of algebraic structure. They not only
provided an explicit construction of the generators, but also fully
characterized all the relations among them, thereby determining the
overall structure of the invariant ring. However, the original
derivation involves substantial details from representation theory
and computational algebra. The computation of the Molien series, the
logic behind the graphical construction, and the derivation of the
syzygies are not easily accessible to non-specialist readers. This
paper aims to provide a systematic exposition of the work of King
\emph{et al}, focusing on the computation of the Molien series and
the graphical tensor method. We fill in the key steps omitted in the
original text and clarify a number of notational and terminological
issues. It is our hope that this exposition makes this important
result more accessible to researchers in quantum information and
invariant theory.

The structure of this paper is as follows. Section~\ref{s2} reviews
the necessary preliminaries, including the Bloch expansion, the
vectorization framework, the reduction of the group action, the
basic concepts of invariant rings, and the definition of the Molien
series. Section~\ref{s3} presents a detailed computation of the
Molien series, providing a complete derivation from Molien's theorem
to the closed-form expression. Section~\ref{s4} introduces the
graphical tensor method, systematically explaining how to construct
invariants by enumerating connected graphs, and presents the
complete list of the 21 generators. Finally, Section~\ref{s5}
concludes the paper with a summary and a discussion of future
directions.

\section{Preliminaries}\label{s2}

We focus on the two-qubit system whose Hilbert space is given by
$\cH_{AB} = \bbC^2 \otimes \bbC^2$. Any Hermitian operator
$\rho=\sum_{m,n,\mu,\nu} \innerm{ m\mu}{\rho}{n\nu}
\out{m\mu}{n\nu}$ acting on $\cH_{AB}$ can be expanded in the
Bloch basis as
$$
\rho =  t \I \otimes \I + \bsa \cdot \pauli \otimes \I + \I \otimes
\bsb \cdot \pauli + \sum_{\alpha,i=1}^3 c_{\alpha i}\, \sigma_\alpha
\otimes \sigma_i ,
$$
where $\I$ is the $2 \times 2$ identity, $t \in \bbR$,
$\bsa=(a_1,a_2,a_3)^\t,~\bsb=(b_1,b_2,b_3)^\t\in \bbR^3$, and $\bsC
:= (c_{\alpha i})_{3\times3}\in\bbR^{3\times3}$. Here
$\bsa\cdot\boldsymbol{\sigma}=\sum_{\alpha=1}^3a_\alpha\sigma_\alpha$,
and $\sigma_\alpha$ are the Pauli matrices. The coefficients are
given by $t = \frac{1}{4} \Tr{\rho}, a_\alpha = \frac{1}{4} \Tr{\rho
(\sigma_\alpha \otimes \I)}, b_i = \frac{1}{4} \Tr{\rho (\I \otimes
\sigma_i)}, c_{\alpha i} = \frac{1}{4} \Tr{\rho (\sigma_\alpha
\otimes \sigma_i)}$. If $\rho$ is a density operator, which is
Hermitian, positive semidefinite, and satisfies $\Tr{\rho} = 1$,
then the normalization condition implies $t = \frac14$, yielding the
standard Bloch form.

Consider independent non-singular local transformations $g \in
\GL_A(2)=\GL(2,\bbC)$ and $h \in \GL_B(2)=\GL(2,\bbC)$ acting as
$\rho \mapsto (g \otimes h)\rho(g \otimes h)^{-1}$. In terms of
matrix elements, this transformation reads
$$
\innerm{m\mu}{\rho}{n\nu}\mapsto  \innerm{m\mu}{(g \otimes h)\rho(g
\otimes h)^{-1}}{n\nu},
$$
where $\innerm{m\mu}{(g \otimes h)\rho(g \otimes h)^{-1}}{n\nu} =
\sum_{m', \mu', n', \nu'} g_{mm'} h_{\mu\mu'}
\innerm{m'\mu'}{\rho}{n'\nu'} (g^{-1})_{n'n} (h^{-1})_{\nu'\nu}$.

We adopt the following definition of vectorization for a bipartite
matrix:
$$
\Vec(\out{m}{n}\otimes \out{\mu}{\nu}) \defeq \Vec(\out{m}{n})
\otimes \Vec(\out{\mu}{\nu}) = \ket{mn} \otimes \ket{\mu\nu}.
$$
Consider the 16-dimensional vector $\Vec{(\rho)} =
(\innerm{m\mu}{\rho}{n\nu})_{mn\mu\nu}$ which is transformed under
the group $\G=\GL_A(2)\times\GL_B(2)$ into another 16-dimensional
vector $\Vec(\rho') = (\innerm{m\mu}{\rho'}{n\nu})_{mn\mu\nu}$. By
direct computation,
\begin{eqnarray*}
\Vec(g\out{m}{n} g^{-1} \ot h\out{\mu}{\nu} h^{-1})
&=& \Vec(g\out{m}{n} g^{-1}) \ot \Vec(h\out{\mu}{\nu} h^{-1})\\
&=& (g \ot (g^{-1})^\t \Vec(\out{m}{n}) \ot (h \ot (h^{-1})^\t) \Vec(\out{\mu}{\nu})\\
&=& (g \ot (g^{-1})^\t) \otimes (h \ot (h^{-1})^\t) \Vec(\out{m}{n}
\ot\out{\mu}{\nu}),
\end{eqnarray*}
which leads to
\begin{eqnarray}\label{zhi1}
\Vec(\rho') &=& \Vec{( (g \ot h)\rho(g \ot h)^{-1})}\notag\\
&=&\Br{(g \ot (g^{-1})^\t) \ot (h \ot (h^{-1})^\t)} \Vec(\rho).
\end{eqnarray}
Using the property
$\vec{(\bsX\bsM\bsY)}=(\bsX\otimes\bsY^\t)\Vec(\bsM)$ for any
$\bsX,\bsY,\bsM\in \bbC^{3\times 3}$, we also have
\begin{eqnarray}\label{zhi2}
\Vec(\rho') &=& \Vec{( (g \ot h)\rho(g \ot h)^{-1})}\notag\\
&=&\Br{(g \ot h) \ot ((g^{-1})^\t \ot (h^{-1})^\t)} \Vec(\rho).
\end{eqnarray}

In the vectorization framework, the transformation of the density
matrix $\rho$ under the action of the group $\GL_A(2) \times
\GL_B(2)$ can be equivalently written in two different tensor
product forms. These two forms correspond respectively to the two
distinct branching rules given in \cite{Makhlin2002}. Both branching
rules ultimately restrict to the same physical group $\GL_A(2)
\times \GL_B(2)$, but differ in the intermediate steps regarding how
the indices are paired and whether the inverse representations are
explicitly written out. The two algebraic forms arising from the
vectorization framework are precisely the linear incarnations of
these two group-subgroup chain decompositions.

\begin{definition}
A \emph{representation} of a group $\G$ on a finite-dimensional
complex vector space $\cX$ is a homomorphism $\pi : \G \to
\GL(\cX)$ of $\G$ to the general linear group of $\cX$, i.e. the
invertible elements in $\End(\cX)$, all complex linear maps from
$\cX$ to itself.
\end{definition}

Let $V = \Span_{\bbC} \set{\Vec(\rho) : \rho \in \D(\bbC^2 \ot
\bbC^2)}$, where $\D(\bbC^2 \ot \bbC^2)$ is the set of all density
matrices acting on $\bbC^2\ot\bbC^2$. Since the complex linear span
of all density matrices exactly fills the entire space of $4 \times
4$ complex matrices, i.e.
$$
\Span_{\bbC} \set{ \rho : \rho \in \D(\bbC^2 \ot \bbC^2)} =
\End(\bbC^2 \ot\bbC^2) \cong \bbC^{16},
$$
and vectorization is a linear isomorphism, it follows that $V \cong
\bbC^{16}$. The action of the group $\G := \GL_A(2) \times \GL_B(2)
$ on density operators induces a representation
$$
\bsT : \G \to \GL(V) \cong \GL(16),
$$
which maps each group element $(g, h)$ to an invertible linear
transformation on $V$. From (\ref{zhi1}), it holds that $\bsT : (g,
h) \mapsto (g \ot (g^{-1})^\t) \ot (h \ot (h^{-1})^\t)$. Let
$\bsT_{(g,h)}:=\bsT(g,h)$, then $\bsT_{(g,h)} = (g \ot (g^{-1})^\t)
\ot (h \ot (h^{-1})^\t)$.

Recall that $\set{x_1, \dots, x_{16}}$ is a basis of the dual space
$V^*$, defined by $x_i(\bse_j) = \delta_{ij}$ for a chosen basis
$\set{\bse_1, \dots, \bse_{16}}$ of $V$. Thus each $x_i$ acts as a
linear coordinate function on $V$, i.e., $x_i(\bsv) = v_i$ for any
$\bsv = \sum_{j=1}^{16} v_j \bse_j \in V$. A polynomial $f \in
\bbC[V] := \bbC[x_1, \dots, x_{16}]$ is called a
$\G$-\emph{invariant polynomial} if for every group element $(g,h)
\in \G$ and every $\bsv  \in V$,
$$
f\Pa{\bsT_{(g,h)}^{-1} \bsv} = f(\bsv),
$$
or equivalently $(g,h)\cdot f = f$ under the action $[(g,h)\cdot
f](\bsv) := f(\bsT_{(g,h)}^{-1} \bsv)$. The set of all such
polynomials is precisely the invariant ring $\bbC[V]^{\G}$, which is
also called a $\bbC$-algebra.

The following Hilbert's finiteness theorem guarantees that
$\bbC[V]^\G$ is finitely generated as a $\bbC$-algebra.
\begin{thrm}[\cite{Weyl1946,GoodmanWallach1998}]
For any reductive linear algebraic group $\sfG$ and any regular
finite-dimensional $\sfG$-module $\cX$, the ring $\bbC[\cX]^\sfG$ of
$\sfG$-invariant polynomials on $\cX$ is finitely generated by
homogeneous invariant polynomials.
\end{thrm}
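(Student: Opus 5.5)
The plan is to reduce to the case of a compact group, where an averaging (Reynolds) operator exists, and then run Hilbert's Noetherian argument. Write $R=\bbC[\cX]$ and $R^\sfG=\bbC[\cX]^\sfG$.

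\textbf{Step 1: grading.} The $\sfG$-action on $R$ is linear and preserves degree, so $R^\sfG=\bigoplus_{d\ge0}R^\sfG_d$ is a graded subalgebra with $R^\sfG_0=\bbC$. Hence every invariant is a sum of homogeneous invariants.

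\textbf{Step 2: Reynolds operator.} Since $\sfG$ is reductive, every regular finite-dimensional module is completely reducible. Each $R_d$ is such a module, so it splits as $R_d=R_d^\sfG\oplus R_d'$, where $R_d'$ is the sum of the nontrivial isotypic components. Projecting onto the trivial isotypic component in each degree gives a $\bbC$-linear, degree-preserving map $\mathcal R:R\to R^\sfG$. It satisfies $\mathcal R|_{R^\sfG}=\mathrm{id}$ and the Reynolds identity
\[
\mathcal R(fh)=f\,\mathcal R(h)\quad\text{for } f\in R^\sfG,\ h\in R.
\]
The key point is that multiplication by an invariant $f$ is a $\sfG$-module map, so it sends isotypic components to isotypic components of the same type.

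For groups such as $\GL(2,\bbC)\times\GL(2,\bbC)$ there is a concrete alternative via Weyl's unitarian trick. Take the maximal compact subgroup $K=\U(2)\times\U(2)$, which is Zariski dense, so $R^\sfG=R^K$. Then set $\mathcal R(h)=\int_K k\cdot h\,dk$ with the Haar measure. This construction and its properties are where reductivity is genuinely used, and I expect it to be the main obstacle to state rigorously in the general algebraic setting. I will therefore present the isotypic-projection version and cite complete reducibility from \cite{GoodmanWallach1998}.

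\textbf{Step 3: Noetherian argument.} Let $I\subset R$ be the ideal generated by all homogeneous invariants of positive degree. By Hilbert's basis theorem, $I$ is generated by finitely many of them, say $f_1,\dots,f_r$ with $\deg f_i>0$.

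I claim $R^\sfG=\bbC[f_1,\dots,f_r]$, and I prove it by induction on degree. Let $h\in R^\sfG_d$ with $d>0$. Then $h\in I$, so $h=\sum_i q_if_i$ for some $q_i\in R$, which may be taken homogeneous of degree $d-\deg f_i<d$. Applying $\mathcal R$ and using the Reynolds identity gives
\[
h=\mathcal R(h)=\sum_i \mathcal R(q_i)\,f_i .
\]
Each $\mathcal R(q_i)$ is a homogeneous invariant of degree $<d$. By induction it lies in $\bbC[f_1,\dots,f_r]$, and therefore so does $h$. The base case $d=0$ gives constants. This yields finite generation by the homogeneous invariants $f_1,\dots,f_r$.
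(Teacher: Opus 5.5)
The paper gives no proof of this theorem; it only cites Weyl and Goodman--Wallach, and your argument is the classical Hilbert proof from those references and is correct: a degree-preserving Reynolds operator from complete reducibility, Hilbert's basis theorem on the ideal generated by the positive-degree homogeneous invariants, then induction on degree. The only non-elementary input is that reductive groups over $\bbC$ are linearly reductive, which you rightly isolate and cite; in the paper's own setting you could avoid it by using the Haar-averaging operator $\widetilde{\cT}_m$ of Appendix~\ref{Molien} (after the Zariski-density reduction of Section~\ref{s3}), for which $\widetilde{\cT}_m(fh)=f\,\widetilde{\cT}_{m-\deg f}(h)$ holds immediately for any invariant homogeneous $f$.
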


In other words, there exist homogeneous invariant polynomials $p_1,
\ldots, p_l \in \bbC[\cX]^\G$ such that $\bbC[\cX]^\G = \bbC[p_1,
\ldots, p_l]$. This set $\set{p_1, \ldots, p_l}$ is called an
\emph{integrity basis} \cite{Quesne1976} for the ring of invariants.
An integrity basis is called \emph{fundamental} if none of its
elements is redundant, and the corresponding polynomials are termed
\emph{fundamental invariants}. Although integrity bases are
non-unique, the minimal number of algebraically independent
generators is uniquely determined by the group action.

Define $\bbC[\cX]^\G_m$ as the subspace spanned by all homogeneous
invariants of degree $m$. The invariant ring is therefore graded by
polynomial degree, satisfying
$\bbC[\cX]^\G=\oplus^\infty_{m=0}\bbC[\cX]^\G_m$. Let $n_m = \dim
\bbC[\cX]_m^\G$ denote the number of linearly independent
homogeneous polynomial invariants of degree $m$. These numbers can
be encoded into a generating function
$$
M(q) = \sum_{m=0}^{\infty} n_m q^m,
$$
known as the \emph{Molien series}.

By the Hochster-Roberts theorem, for any reductive linear algebraic
group $\sfG$ and any regular finite-dimensional $\sfG$-module $\cX$,
the invariant ring $\bbC[\cX]^\sfG$  satisfies \emph{Cohen-Macaulay
property} \cite{HochsterRoberts1974}.
\begin{thrm}[\cite{Sturmfels1993}]\label{CMP}
Let $\bbC[\cX]^\sfG$ be Cohen-Macaulay. There exists a
\emph{homogeneous system of parameters} $K_1,\ldots,K_n$ such that
$\bbC[\cX]^\G$ is finitely generated as a free module over
$\bbC[K_1,\ldots,K_n]$. Moreover, there exist homogeneous invariants
$\set{J_1,\ldots,J_r}$ with $J_0=1$ satisfying
\begin{eqnarray}\label{dsum}
\bbC[\cX]^\sfG =\bigoplus^r_{k=0}J_k\cdot \bbC[K_1,\ldots,K_n].
\end{eqnarray}
This direct-sum representation is referred to as the Hironaka
decomposition of $\bbC[\cX]^\sfG$.
\end{thrm}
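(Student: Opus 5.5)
The plan is to derive the statement from two standard facts about graded rings: the graded Noether normalization lemma and the characterization of Cohen--Macaulay rings via regular sequences.

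\emph{Step 1 (graded setting).} Write $R:=\bbC[\cX]^\sfG=\bigoplus_{m\ge0}R_m$ with $R_0=\bbC$. By the Hilbert finiteness theorem stated above, $R$ is a finitely generated graded $\bbC$-algebra, and it has homogeneous generators $p_1,\dots,p_l$.

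\emph{Step 2 (a homogeneous system of parameters).} Let $n=\dim R$ be the Krull dimension. Apply graded Noether normalization to obtain homogeneous elements $K_1,\dots,K_n\in R_+$ that are algebraically independent and make $R$ a finitely generated module over $A:=\bbC[K_1,\dots,K_n]$. Concretely, I will choose the $K_i$ one at a time by prime avoidance. At each stage I pick a homogeneous $K_{i+1}$, a suitable power combination of the $p_j$, lying outside every minimal prime of $R/(K_1,\dots,K_i)$ of maximal dimension. This continues until $R/(K_1,\dots,K_n)$ is Artinian. Finiteness of the module then follows from the graded Nakayama lemma, because $R/(K_1,\dots,K_n)$ is a finite-dimensional vector space. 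The $K_i$ obtained this way form an h.s.o.p.

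\emph{Step 3 (freeness and the decomposition).} Cohen--Macaulayness, supplied by Hochster--Roberts, means that every h.s.o.p.\ is a regular sequence on $R$. From this:
\begin{itemize}
\item choose homogeneous $J_0=1,J_1,\dots,J_r$ whose images form a $\bbC$-basis of the finite-dimensional graded space $R/(K_1,\dots,K_n)R$;
\item by the graded Nakayama lemma, the $J_k$ generate $R$ as an $A$-module;
\item these generators are free, because a finitely generated graded module $M$ over the polynomial ring $A$ is free exactly when $\mathrm{Tor}_1^A(M,\bbC)=0$.
\end{itemize}
This Tor vanishing follows from regularity of $K_1,\dots,K_n$. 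The Koszul complex on the $K_i$ resolves $\bbC$ over $A$. Tensoring it with $R$ is still exact in positive degrees, since the sequence is $R$-regular, so $\mathrm{Tor}_i^A(R,\bbC)=0$ for $i\ge1$. A graded minimal free resolution of $R$ then has length zero, so $R$ is free, and a homogeneous basis lifting a basis of $R\otimes_A\bbC$ is exactly $\{J_k\}$. This gives the decomposition \eqref{dsum}.

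The main obstacle is the step "Cohen--Macaulay $\Rightarrow$ every h.s.o.p.\ is regular." This depends on the equivalence, for graded rings, between depth equal to dimension at the irrelevant ideal and all local rings being Cohen--Macaulay. I would not reprove this; I would cite it from standard commutative algebra (Bruns--Herzog, or Sturmfels' book). Everything else is routine graded Nakayama bookkeeping.
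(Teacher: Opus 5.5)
The paper gives no proof of this statement. It is quoted from Sturmfels, with only the preceding appeal to Hochster--Roberts as justification. In that source, Cohen--Macaulayness of a graded algebra is phrased essentially as freeness over a homogeneous system of parameters, so the Hironaka decomposition amounts to choosing a homogeneous basis.

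Your argument is correct and takes a genuinely different route. You start from the intrinsic notion (depth equals dimension), which is what Hochster--Roberts actually provides. You then build an h.s.o.p.\ by graded prime avoidance and deduce freeness from regularity through the Koszul complex and $\mathrm{Tor}^A_1(R,\bbC)=0$. This supplies the bridge between the two notions of Cohen--Macaulayness that the paper's citation passes over silently. It also gives freeness over \emph{every} h.s.o.p., not just some.

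Two details deserve a sentence each in a write-up:
\begin{itemize}
\item Algebraic independence of $K_1,\dots,K_n$ holds because $R$ is finite over $A$. Hence $\dim A=n$, and the surjection $\bbC[T_1,\dots,T_n]\to A$ from an $n$-dimensional domain must be injective.
\item $J_0=1$ is admissible because $(K_1,\dots,K_n)R\subseteq R_+$, so $1$ has nonzero image in $R/(K_1,\dots,K_n)R$.
\end{itemize}

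The step you flag as the main obstacle also reduces quickly to the local statement, so only that needs citing. The $K_i$ form a system of parameters of $R_{R_+}$, hence a regular sequence there. A zero divisor on $R/(K_1,\dots,K_i)R$ would be witnessed by a nonzero homogeneous element. Its annihilator is a proper homogeneous ideal, so it lies in $R_+$, and the element survives localization at $R_+$, contradicting regularity in the local ring.
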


The \emph{algebraically independent}\footnote{A set of elements is called algebraically independent if there is no nonzero polynomial that vanishes when these elements are substituted into it. Algebraic independence is the polynomial generalization of linear independence.} invariants $K_1,\ldots,K_n$ are
known as \emph{primary} invariants, and the \emph{linearly
independent} invariants $J_1,\ldots,J_r$ are known as
\emph{secondary} invariants. Theorem \ref{CMP} indicates that every
polynomial invariant $I \in \bbC[V]^\G$ admits a unique expansion $I
= \sum^r_{k=0}J_k f_k$, where each $f_k$ is a polynomial in the
primary invariants $K_1,\ldots,K_n$. For $\abs{q}<1$, this
decomposition yields an explicit closed form for the Molien series:
\begin{eqnarray}\label{MJK}
M(q) =\frac{\sum_{k=0}^{r} q^{\deg J_k}}{\prod_{i=1}^n (1 - q^{\deg
K_i})}.
\end{eqnarray}
A full derivation is provided in Appendix~\ref{a}.

\section{Calculation of the Molien series}\label{s3}

The Molien series serves not only as the generating function of the
sequence $\set{n_m}_{m=0}^{\infty}$. More crucially, its closed-form
expression directly encodes the number and homogeneous degrees of
both primary and secondary invariants. Such information fully
captures the structure of a minimal generating set for the invariant
ring.

Although the group $\GL(2) \times \GL(2)$ is noncompact, its adjoint
action on the relevant matrix space can be replaced by that of its
maximal compact subgroup $\SU(2) \times \SU(2)$. The interchange is
justified by two facts. First, $\SU(2)$ is Zariski dense in
$\SL(2)$, so any polynomial identity or invariance condition that
holds on $\SU(2)$ extends to $\SL(2)$. Second, scalar matrices act
trivially by conjugation; since every $g\in \GL(2)$ factors as
$\lambda\bsS$ with nonzero $\lambda\in\bbC$ and $\bsS\in\SL(2)$, the
conjugation action of $\GL(2)$ coincides exactly with that of
$\SL(2)$. Hence, on the polynomial invariant ring,
$$
\bbC[V]^{\GL(2)} =\bbC[V]^{\SL(2)}= \bbC[V]^{\SU(2)}.
$$
This equality extends componentwise to the product groups, giving
$$
\bbC[V]^{\GL(2) \times \GL(2)} =\bbC[V]^{\SL(2) \times \SL(2)}=
\bbC[V]^{\SU(2) \times \SU(2)}.
$$
Thus, for polynomial invariants of density operators under the
adjoint action---which aligns with the equivalence between local
unitary (LU) and local $\GL$ equivalence \cite{Turner2017}---the
noncompact group can be substituted by its compact counterpart. This
reduction is particularly useful because it ensures the compactness
required for the standard formulation of Molien's theorem.

Explicit computation of the Molien series relies on the foundational
result known as Molien's theorem, which we state
below.
\begin{thrm}[\cite{Derksen2015}]
Let $\sfG$ be a compact Lie
group equipped with a normalized Haar measure $\dif\mu$ satisfying
$\int_\sfG \dif\mu(g) = 1$. Given a rational representation $\bsT :
\sfG \to \GL(V)$ with $\bsT(g) = \bsT_g \in \GL(V)$ for each $g \in
\sfG$, then the Molien series takes the integral form
$$
M_{\sfG}(q) = \int_{\sfG} \frac{\dif\mu(g)}{\det(\I_V - q \bsT_g)}.
$$
\end{thrm}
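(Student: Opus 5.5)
The plan is to prove Molien's formula degree by degree. Expand the integrand as a power series in $q$. Then identify the coefficient of $q^m$ with $n_m=\dim\bbC[V]^\sfG_m$ by averaging over the group with the Haar measure.

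\textbf{Step 1: the projector onto invariants.} Let $\cX_m=\bbC[V]_m\cong \mathrm{Sym}^m(V^*)$ denote the homogeneous polynomials of degree $m$. The group acts on $\cX_m$ by $[g\cdot f](\bsv)=f(\bsT_g^{-1}\bsv)$; write $\bsS_m(g)$ for this operator. Define
$$
P_m=\int_\sfG \bsS_m(g)\,\dif\mu(g).
$$
Left invariance of $\dif\mu$ together with $\int_\sfG\dif\mu=1$ gives $\bsS_m(h)P_m=P_m$ for all $h$. Hence the image of $P_m$ consists of invariants. Moreover $P_m f=f$ whenever $f$ is invariant, and $P_m^2=P_m$. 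So $P_m$ is a projection onto $\bbC[V]^\sfG_m$, and
$$
n_m=\Tr P_m=\int_\sfG \Tr\bsS_m(g)\,\dif\mu(g),
$$
where linearity of the trace lets it pass through the integral on a finite-dimensional space.

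\textbf{Step 2: the character of $\mathrm{Sym}^m$.} Because $\sfG$ is compact, we may choose a $\sfG$-invariant inner product on $V$, so each $\bsT_g$ is unitary and hence diagonalizable. Let its eigenvalues be $\lambda_1,\dots,\lambda_N$ with $|\lambda_i|=1$. On $V^*$ the operator $f\mapsto f\circ\bsT_g^{-1}$ has eigenvalues $\lambda_i^{-1}=\bar\lambda_i$. The monomials in an eigenbasis of $V^*$ then diagonalize $\bsS_m(g)$, so
$$
\Tr\bsS_m(g)=h_m(\bar\lambda_1,\dots,\bar\lambda_N),
$$
the complete homogeneous symmetric polynomial. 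For $|q|<1$ the generating-function identity gives
$$
\sum_m h_m(\bar\lambda)\,q^m=\prod_i (1-q\bar\lambda_i)^{-1}=\det(\I_V-q\,\bsT_{g^{-1}})^{-1}.
$$

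\textbf{Step 3: summing and removing the inverse.} Sum Step 1 over $m$. The interchange of sum and integral is justified by uniform convergence: $|h_m(\bar\lambda)|\le\binom{m+N-1}{N-1}$, so for fixed $|q|<1$ the series converges uniformly in $g$ by dominated convergence. This gives
$$
M(q)=\int_\sfG \det(\I_V-q\bsT_{g^{-1}})^{-1}\,\dif\mu(g).
$$
Finally, the substitution $g\mapsto g^{-1}$ preserves Haar measure on a compact (hence unimodular) group, which yields the stated formula. Alternatively, for the representations in this paper, such as $\bsT$ above, $\det(\I-q\bsT_g)$ is real on a self-dual module, so the inverse can be dropped directly.

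The main obstacle I expect is the bookkeeping in Step 2. One must get the eigenvalues of the contragredient action on polynomials right ($\lambda^{-1}$ versus $\lambda$), and justify diagonalizability via unitarity. The convergence and measure-inversion points in Step 3 are routine once compactness is used.
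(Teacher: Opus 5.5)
Your proposal is correct and follows essentially the same route as the paper's proof. Both use the Reynolds projector, whose trace is $n_m$; diagonalize $\bsT_g$ and read off the eigenvalues $\lambda_1^{-a_1}\cdots\lambda_N^{-a_N}$ on monomials in the dual eigenbasis; sum the geometric series to get $\det(\I_V-q\bsT_{g^{-1}})^{-1}$; and finish with the inversion-invariance of Haar measure. Your invariant inner product (which justifies diagonalizability) and your explicit bound $\binom{m+N-1}{N-1}$ (which justifies uniform convergence) fill in two points the paper only asserts.
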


A detailed proof of this theorem is provided in
Appendix~\ref{Molien}. As discussed before, our representation reads
$$
\bsT:(g,h)\mapsto \bsT_{(g,h)}=g\ot\bar g\ot h\ot\bar h,
$$
for all $(g,h)\in\G=\SU_A(2)\times \SU_B(2)$. Thus the Molien
generating function $M_\G(q)$ can be calculated as
\begin{eqnarray*}
M_\G(q) = \int_{\G}\frac{\dif\mu(g,h)}{\det(\I_{16}-q\bsT_{(g,h)})}.
\end{eqnarray*}

Recall that every element $g\in \SU_A(2)$ can be diagonalized into $\diag(e^{\mathrm{i}\frac{\alpha}2},
e^{-\mathrm{i}\frac{\alpha}2})$ for $\alpha\in[0,2\pi]$. For a class function $\varphi$ satisfying $\varphi(g)=\varphi(hgh^{-1})$ for all $h\in\SU_A(2)$, the Weyl integral formula for the normalized Haar measure on $\SU_A(2)$ yields
\begin{eqnarray*}
\int_{\SU(2)}\varphi(g)\dif\mu(g)=\frac{1}{|W|}\int_T
\varphi(t)f(t)\dif t= \int^{2\pi}_0
\varphi(\alpha)\Pa{\frac{1-\cos\alpha}{2\pi}\dif\alpha},
\end{eqnarray*}
where $W$ represents the Weyl group of $\SU_A(2)$,
$T=\set{\diag(e^{\mathrm{i}\frac{\alpha}2},
e^{-\mathrm{i}\frac{\alpha}2})~|~\alpha\in[0,2\pi]}$ is its maximal
torus, and
$f(t)=\Abs{e^{\mathrm{i}\frac{\alpha}2}-e^{-\mathrm{i}\frac{\alpha}2}}^2$.

Suppose $h\in \SU_B(2)$ is diagonalized into $\diag(e^{\mathrm{i}\frac{\beta}2},
e^{-\mathrm{i}\frac{\beta}2})$. Then the normalized Haar measure $\dif\mu(g,h)$ over $\SU_A(2)\times\SU_B(2)$ can be written as
\begin{eqnarray*}
\dif\mu(g,h)&=&\dif\mu(g)\dif\mu(h)=\Pa{\frac{1-\cos\alpha}{2\pi}\dif\alpha}\Pa{\frac{1-\cos\beta}{2\pi}\dif\beta}\\
&=& \frac1{4\pi^2}(1-\cos\alpha)(1-\cos\beta)\dif\alpha\dif\beta,
\end{eqnarray*}
where $\alpha,\beta\in[0,2\pi]$. Now $\bsT_{(g,h)}$ is diagonalized
into
\begin{eqnarray*}
&&\diag(e^{\mathrm{i}\frac{\alpha}2},
e^{-\mathrm{i}\frac{\alpha}2})\ot
\diag(e^{-\mathrm{i}\frac{\alpha}2},
e^{\mathrm{i}\frac{\alpha}2})\ot \diag(e^{\mathrm{i}\frac{\beta}2},
e^{-\mathrm{i}\frac{\beta}2})\ot \diag(e^{-\mathrm{i}\frac{\beta}2},
e^{\mathrm{i}\frac{\beta}2})\\
&&=\diag(1,e^{\mathrm{i}\alpha},e^{-\mathrm{i}\alpha},1)\ot\diag(1,e^{\mathrm{i}\beta},e^{-\mathrm{i}\beta},1)
\end{eqnarray*}
whose full set of eigenvalues are
$\Set{1,e^{\pm\mathrm{i}\beta},1,e^{\mathrm{i}\alpha},e^{\mathrm{i}(\alpha\pm\beta)},e^{\mathrm{i}\pm\alpha},e^{\mathrm{i}(-\alpha\pm\beta)},e^{-\mathrm{i}\alpha},1,e^{\pm\mathrm{i}\beta},1}$.

Let $z=e^{\mathrm{i}\alpha}$ and $w=e^{\mathrm{i}\beta}$. Then $\dif
z=\mathrm{i}z\dif\alpha$ and $\dif w = \mathrm{i}w\dif\beta$, thus $\frac{\dif z\dif
    w}{\mathrm{i}z\cdot\mathrm{i}w}=\dif\alpha\dif\beta$. Apparently
$\abs{1-z}^2=(1-\cos\alpha)^2+\sin^2\alpha=2(1-\cos\alpha)$ and
$\abs{1-w}^2=2(1-\cos\beta)$.
We get that
\begin{eqnarray*}
\det(\I_{16}-q\bsT_{(g,h)}) &=& (1-q)^4(1-qz)^2(1-q\bar
z)^2(1-qw)^2(1-q\bar w)^2\\
&&\times(1-qzw)(1-qz\bar w)(1-q\bar zw)(1-q\bar z\bar w)
\end{eqnarray*}
and
\begin{eqnarray*}
&&\frac1{4\pi^2}(1-\cos\alpha)(1-\cos\beta)=\frac1{(4\pi)^2}2(1-\cos\alpha)2(1-\cos\beta)\\
&&=\frac1{(4\pi)^2}(-\abs{1-z}^2)(-\abs{1-w}^2)=\frac1{(4\pi)^2}(1-z)^2(1-w)^2z^{-1}w^{-1}.
\end{eqnarray*}
Therefore
\begin{eqnarray}\label{M}
M_\G(q) &=&
    \int_{\SU(2)}\int_{\SU(2)}{\det}^{-1}(\I_{16}-q\bsT_{(g,h)})\dif\mu(g)\dif\mu(h)\nonumber\\
    &=&
    \frac1{4\pi^2}\iint\frac{(1-\cos\alpha)(1-\cos\beta)\dif\alpha\dif\beta}{\det(\I_{16}-q\bsT_{(g,h)})}\nonumber\\
    &=&\frac1{(4\pi\mathrm{i})^2}
    \oint_{\sfU(1)}\oint_{\sfU(1)}\frac{(1-z)^2(1-w)^2z^{-2}w^{-2}}{(1-q)^4(1-qz)^2(1-q\bar
        z)^2(1-qw)^2(1-q\bar w)^2}\nonumber\\
    &&\times \frac{\dif z\dif w}{(1-qzw)(1-qz\bar w)(1-q\bar zw)(1-q\bar
        z\bar w)}.
\end{eqnarray}

To compute the integral representation of the Molien series, we
employ complex residue calculus. We begin by recalling the
definition of the residue at an isolated singularity.
\begin{definition}[\cite{Marsden1987}]
Let $f$ be analytic on a punctured disk $0 < |z - z_0| < r$ with
Laurent expansion
$$
f(z) = \sum_{n=-\infty}^{\infty} a_n (z - z_0)^n.
$$
The \emph{residue} of $f$ at $z_0$ is defined as the coefficient
$a_{-1}$, denoted by $\Res_{z=z_0}f(z) = a_{-1}$.
\end{definition}

In practice, residues are often computed via contour integrals. By Cauchy's integral formula, the residue can also be expressed as
$$
\Res_{z=z_0}f(z)=\Res(f, z_0) = \frac{1}{2\pi \mathrm{i}}
\oint_{\abs{z - z_0} = \ell} f(z) \dif z,
$$
where $\ell > 0$ is small enough so that the circle encloses no
other singularities. The following standard result, known as the
Cauchy residue theorem, generalizes this formula to contours
enclosing multiple isolated singularities.

\begin{thrm}[\cite{Marsden1987}]\label{rest}
Let $f(z)$ be analytic inside and on a positively oriented simple
closed contour $C$, except for a finite number of isolated
singularities $z_1,z_2,\ldots,z_n$ inside $C$. Then
$$
\oint_{C}f(z)\dif z=2\pi\mathrm{i}\sum^n_{k=1}\Res_{z=z_k}f(z),
$$
where $\Res_{z=z_k}f(z)$ is the residue of $f$ at $z_k$.
\end{thrm}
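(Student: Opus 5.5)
The statement to prove is Theorem~\ref{rest}, the Cauchy residue theorem for a positively oriented simple closed contour $C$ enclosing finitely many isolated singularities $z_1,\ldots,z_n$. The plan is to reduce the integral over $C$ to a sum of integrals over small circles around the singularities, and then evaluate each small-circle integral from the Laurent expansion in the definition of the residue.

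\textbf{Step 1: isolate the singularities.} Choose radii $\ell_k>0$ so that the closed disks $\overline{D}(z_k,\ell_k)$ are pairwise disjoint, lie inside $C$, and meet no other singularity. Let $\Omega$ be the region inside $C$ with these disks removed. Then $f$ is analytic on a neighbourhood of $\overline{\Omega}$.

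\textbf{Step 2: deform the contour.} Apply the Cauchy--Goursat theorem to the multiply connected region $\Omega$. Concretely, join $C$ to each small circle $\gamma_k:\abs{z-z_k}=\ell_k$ by a cut traversed once in each direction. This produces a closed contour bounding a simply connected region on which $f$ is analytic, so its integral vanishes. The cut contributions cancel, and the circles appear with negative orientation. This gives
$$
\oint_C f(z)\dif z=\sum_{k=1}^n\oint_{\gamma_k} f(z)\dif z ,
$$
with each $\gamma_k$ positively oriented. This is the step I expect to be the main obstacle. For a general Jordan curve, the existence of the cuts and the use of Cauchy's theorem on the resulting region tacitly rely on the Jordan curve theorem. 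I would handle it by citing the general (homology) form of Cauchy's theorem: the cycle $C-\sum_k\gamma_k$ has winding number $0$ about every point outside the domain of analyticity. This holds because $C$ and $\gamma_k$ each wind once around $z_k$, and neither winds around any point outside $C$.

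\textbf{Step 3: local computation.} On the punctured disk around $z_k$, write the Laurent series $f(z)=\sum_{m}a^{(k)}_m(z-z_k)^m$. It converges uniformly on $\gamma_k$, so it may be integrated term by term. Parametrize $z=z_k+\ell_k e^{\mathrm{i}\theta}$. Then
$$
\oint_{\gamma_k}(z-z_k)^m\dif z=\mathrm{i}\,\ell_k^{m+1}\int_0^{2\pi}e^{\mathrm{i}(m+1)\theta}\dif\theta ,
$$
which equals $0$ for $m\neq-1$ and $2\pi\mathrm{i}$ for $m=-1$. Hence $\oint_{\gamma_k}f\dif z=2\pi\mathrm{i}\,a^{(k)}_{-1}=2\pi\mathrm{i}\Res_{z=z_k}f(z)$. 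This is consistent with the contour-integral expression for the residue stated just before the theorem.

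\textbf{Step 4: conclude.} Summing the results of Step 3 over $k$ and substituting into the identity of Step 2 yields
$$
\oint_C f(z)\dif z=2\pi\mathrm{i}\sum_{k=1}^n\Res_{z=z_k}f(z).
$$
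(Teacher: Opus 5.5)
The paper gives no proof of this statement. Theorem~\ref{rest} is quoted from \cite{Marsden1987} and used purely as a computational tool in the proof of Theorem~\ref{Mq}, so there is no internal argument to compare yours against.

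Your proposal is a correct and complete version of the standard proof, and the two delicate points are handled properly. First, the choice of radii in Step~1 places each circle $\gamma_k$ strictly inside the punctured disk on which the Laurent expansion at $z_k$ converges. This justifies the term-by-term integration in Step~3. Second, your winding-number check in Step~2 is exactly the hypothesis of the homology form of Cauchy's theorem. The cycle $C-\sum_k\gamma_k$ has index $1-1=0$ about each $z_k$; the circles $\gamma_j$ with $j\neq k$ contribute nothing because the closed disks are disjoint. It also has index $0$ about every point exterior to $C$.

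For comparison, another common textbook route avoids the multiply connected region. Subtract from $f$ its principal parts $S_k(z)=\sum_{m\geqslant 1}a^{(k)}_{-m}(z-z_k)^{-m}$, which converge uniformly on compact subsets of $\bbC\setminus\{z_k\}$. The difference extends analytically across every $z_k$, so its integral over $C$ vanishes by Cauchy's theorem. Meanwhile $\oint_C S_k\,\dif z=2\pi\mathrm{i}\,a^{(k)}_{-1}$, because every term with $m\geqslant 2$ has a primitive. That version relies on the same topological input (the index of $C$ is $1$ at interior points and $0$ at exterior points), so neither route is more general. Yours has the small advantage of connecting directly to the small-circle formula for the residue stated just before the theorem.
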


The Residue Theorem reduces a contour integral around a closed curve
to the sum of residues at the isolated singularities inside the
curve. It is the fundamental tool for evaluating such integrals. For
higher-order poles, a more explicit formula is useful. Lemma
\ref{kpole} provides a convenient method to compute the residue when
$f(z)$ has a pole of order $k$ at $z_0$, expressed directly in terms
of the analytic part $\varphi(z)$.
\begin{lem}[\cite{Marsden1987}]\label{kpole}
Suppose that $\varphi(z)$ is analytic in $|z-z_0| < r$ and that $k
\geqslant 1$ is an integer. Let $f(z) =
\dfrac{\varphi(z)}{(z-z_0)^k}$. Then
$$
\Res_{z=z_0}f(z) = \frac{1}{(k-1)!} \lim_{z \to z_0}
\frac{\dif^{k-1}}{\dif z^{k-1}} \Br{(z-z_0)^k f(z)} =
\frac{\varphi^{(k-1)}(z_0)}{(k-1)!}.
$$
\end{lem}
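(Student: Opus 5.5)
The plan is to expand the analytic factor $\varphi$ in a Taylor series about $z_0$ and read off the coefficient of $(z-z_0)^{-1}$ in the Laurent expansion of $f$. This matches the definition of the residue recalled above.

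Since $\varphi$ is analytic on the disk $|z-z_0|<r$, it has a convergent expansion
$$
\varphi(z)=\sum_{n=0}^{\infty}\frac{\varphi^{(n)}(z_0)}{n!}(z-z_0)^n
$$
there. Dividing termwise by $(z-z_0)^k$ gives, on the punctured disk $0<|z-z_0|<r$,
$$
f(z)=\sum_{n=0}^{\infty}\frac{\varphi^{(n)}(z_0)}{n!}(z-z_0)^{n-k}.
$$
This is a Laurent series converging on that annulus. By uniqueness of Laurent expansions, it is \emph{the} Laurent expansion of $f$ at $z_0$.

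The coefficient of $(z-z_0)^{-1}$ comes from the single index $n=k-1$, which is admissible because $k\geqslant 1$. Hence
$$
\Res_{z=z_0}f(z)=\frac{\varphi^{(k-1)}(z_0)}{(k-1)!}.
$$
For the first expression in the statement, note that $(z-z_0)^k f(z)=\varphi(z)$ on the punctured disk. Since $\varphi$ is analytic at $z_0$, its $(k-1)$-th derivative is continuous there. Therefore
$$
\lim_{z\to z_0}\frac{\dif^{k-1}}{\dif z^{k-1}}\Br{(z-z_0)^k f(z)}=\varphi^{(k-1)}(z_0),
$$
and the two formulas agree.

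The only point needing care is justifying that the termwise-divided series is the Laurent series of $f$. This follows from uniqueness of Laurent coefficients on an annulus. Alternatively, one can compute the residue directly as
$$
\frac{1}{2\pi\mathrm{i}}\oint\frac{\varphi(z)}{(z-z_0)^k}\dif z
$$
and apply Cauchy's integral formula for derivatives, which gives the same value. Everything else is routine.
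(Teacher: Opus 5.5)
Your proof is correct and complete. By uniqueness, the termwise-divided Taylor series of $\varphi$ is the Laurent expansion of $f$ on $0<|z-z_0|<r$, and its $(z-z_0)^{-1}$ coefficient is $\varphi^{(k-1)}(z_0)/(k-1)!$; the limit form follows because $(z-z_0)^k f=\varphi$ on the punctured disk. The paper gives no proof of its own and simply cites \cite{Marsden1987}, where the standard argument is exactly this coefficient-reading one.
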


Applying these results to the Molien integral in \eqref{M}, we
repeatedly invoke Theorem~\ref{rest} and Lemma~\ref{kpole}. The
final closed-form expression for $M_\G(q)$ is given in
Theorem~\ref{Mq}.

\begin{thrm}[\cite{Grassl1998,King2007}]\label{Mq}
Denote $\sfG=\SU(2)\times\SU(2)$. For $\abs{q}<1$, it holds that
\begin{eqnarray*}
M_{\sfG}(q)&=&\frac{q^{10}-q^8-q^7+2 q^6+2 q^5+2 q^4-q^3-q^2+1}{(1-q)^{10} (1+q)^6(1+q^2)^2(1+q+q^2)^3}\\
&=&\frac{q^{15}+q^{11}+q^{10}+3 q^9+2 q^8+2 q^7+3
q^6+q^5+q^4+1}{(1-q)(1-q^2)^3(1-q^3)^2(1-q^4)^3(1-q^6)}.
\end{eqnarray*}
\end{thrm}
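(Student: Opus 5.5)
We evaluate the double contour integral \eqref{M} iteratively, integrating first in $w$ with $z$ fixed on the unit circle, and then in $z$. Throughout, $\abs{q}<1$ and $\abs{z}=\abs{w}=1$.

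\textbf{Step 1: the inner integral in $w$.} First we clear the negative powers. Multiplying numerator and denominator by $w^2$ gives
\[
(1-qw)^2(1-q\bar w)^2 = w^{-2}(1-qw)^2(w-q)^2.
\]
The factors $(1-qz\bar w)(1-q\bar z\bar w)$ similarly become $w^{-2}(w-qz)(w-q\bar z)$. After these substitutions the $w$-dependence of the integrand is
\[
\frac{(1-w)^2\, w^{2}}{(1-qw)^2(w-q)^2(1-qzw)(1-q\bar zw)(w-qz)(w-q\bar z)}.
\]
We should double-check the net power of $w$ in this bookkeeping. Inside the unit disc the poles are $w=q$ (double), $w=qz$ and $w=q\bar z$ (simple), and possibly $w=0$ if a negative power survives. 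The remaining poles $1/q$, $1/(qz)$, $1/(q\bar z)$ lie outside.

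We apply Theorem~\ref{rest}. The simple-pole residues are immediate. The double pole at $w=q$ is handled by Lemma~\ref{kpole} with $k=2$, i.e. by differentiating the analytic cofactor once. The result is a rational function $R(z,q)$. It is symmetric under $z\leftrightarrow\bar z$, reflecting the Weyl group symmetry.

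\textbf{Step 2: the outer integral in $z$.} We substitute $R(z,q)$ together with the remaining $z$-factors into the $z$-contour integral. These factors are $(1-z)^2z^{-2}/[(1-qz)^2(1-q\bar z)^2]$ and the prefactor $(1-q)^{-4}$. We again convert $\bar z=1/z$ and locate all poles inside $\abs{z}<1$. These are $z=q$ (double from the explicit factors) and $z=0$. There are also new poles coming from the denominators of $R$, such as $z=q^2$ (arising from $1-q^2\bar z$-type factors) and possibly $z=\pm q$ or $z=q^{\pm1}$ cancellations that must be checked. We then sum their residues, again via Lemma~\ref{kpole}.

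\textbf{Main obstacle.} The expected difficulty is the rational-function algebra in Step 2. The residue at the higher-order pole will produce a large expression, and it must collapse to the stated closed form with denominator $(1-q)^{10}(1+q)^6(1+q^2)^2(1+q+q^2)^3$. To control this, I would keep every residue in partial-fraction form over the common denominator $(1-q)^{a}(1+q)^{b}(1+q^2)^c(1+q+q^2)^d$ and combine only at the end. As a sanity check, I would expand to low order in $q$ and compare against the expected dimensions: $n_0=1$, $n_1=1$ (trace), $n_2=4$ (namely $t^2$, $\abs{\bsa}^2$, $\abs{\bsb}^2$, $\Tr \bsC^\t\bsC$), and so on.

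\textbf{Step 3: the second form.} Finally, we verify the second expression by direct multiplication. Factor
\[
(1-q)(1-q^2)^3(1-q^3)^2(1-q^4)^3(1-q^6)
=(1-q)^{10}(1+q)^{7}(1+q^2)^3(1+q+q^2)^3(1-q+q^2).
\]
Then check that
\[
(q^{10}-q^8-\cdots+1)(1+q)(1+q^2)(1-q+q^2)
\]
equals the degree-15 numerator $q^{15}+q^{11}+\cdots+1$. This is a finite polynomial identity, verified by expansion.
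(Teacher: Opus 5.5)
Your plan is the paper's own iterated residue computation with the order of integration swapped, which changes nothing because $F_q(z,w)$ is symmetric under $z\leftrightarrow w$; the one correction is that $z=0$ is not a pole of the outer integrand (once $\bar z=1/z$ is substituted everywhere, including inside $R$, the integrand is regular at $0$, just as your $w$-integrand was), and the interior poles are exactly $z=q$ (double), $z=-q$ (simple) and $z=q^{2}$ (double), whose residues, summed and multiplied by $(2\pi\mathrm{i})^2/(4\pi\mathrm{i})^2=\tfrac14$, give the first closed form. Your Step 3 is correct: the second denominator is the first times $(1+q^{2})(1+q^{3})$, and $(q^{10}-q^{8}-q^{7}+2q^{6}+2q^{5}+2q^{4}-q^{3}-q^{2}+1)(1+q^{2})(1+q^{3})$ expands to the stated degree-$15$ numerator, so this step verifies the second equality, which the paper's proof never checks.
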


\begin{proof}
{\bf Step 1: Let us rewrite the integrand on the unit circles.} Note
that $\bar z=\frac1z$ and $\bar w=\frac1w$ since $z,w\in\sfU(1)$.
Hence $1-q\bar z=\frac{z-q}z,\quad 1-q\bar w=\frac{w-q}w$. Also
$1-q\bar z\bar w=\frac{zw-q}{zw},1-qz\bar w=\frac{w-qz}w, 1-q\bar
zw=\frac{z-qw}z$. Denote
$$
F_q(z,w):=\frac{z^2(1-z)^2w^2(1-w)^2}{(1-q)^4(1-qz)^2(z-q)^2(1-qw)^2(w-q)^2(1-qzw)(w-qz)(z-qw)(zw-q)}.
$$
Therefore the integral becomes
\begin{eqnarray*}
M_\G(q)&=&
\frac1{(4\pi\mathrm{i})^2}\oint_{\sfU(1)}\oint_{\sfU(1)}F_q(z,w)\dif
z\dif w.
\end{eqnarray*}
{\bf Step 2: We compute the $z$-integral by residues.} For fixed
$w\in\sfU(1)$, the rational function $F_q(z,w)$ in $z$ must have
three poles, determined by the zeros of its denominator.
\begin{itemize}
\item $z = q$ is a double pole arising from the factor $(z - q)^2$. For a double pole, using Lemma~\ref{kpole}, we can obtain
\begin{eqnarray*}
\Res_{z=q}F_q(z,w)&=&\partial_{z=q}\Br{(z-q)^2F_q(z,w)}\\
&=&\frac{w^2\Br{q^6w-2q^5w-q^4w+2q^3(w^2+1)-q^2w-2qw+w}}{(q-1)^5q(q+1)^3(q-w)^2(q^2-w)^2(qw-1)^2(q^2w-1)^2}.
\end{eqnarray*}
\item $z = q w$ is a simple pole arising from $(z - q w)$. Since $z=qw$ is a simple
pole arising from the factor $(z-qw)$, using Lemma \ref{kpole}, we
have
\begin{eqnarray*}
&&\Res_{z=qw}F_q(z,w)=\lim_{z\to qw}(z-qw)F_q(z,w)\\
&&=\lim_{z\to
qw}\frac{z^2(1-z)^2w^2(1-w)^2}{(1-q)^4(1-qz)^2(z-q)^2(1-qw)^2(w-q)^2(1-qzw)(w-qz)(zw-q)}\\
&&=\frac{(qw)^2(1-qw)^2w^2(1-w)^2}
{(1-q)^4(1-q^2w)^2(qw-q)^2(1-qw)^2(w-q)^2(1-q^2w^2)(w-q^2w)(qw^2-q)}\\
&&=\frac{w^3}{(q-1)^5q(q+1)(w^2-1)(q-w)^2(q^2w-1)^2(q^2w^2-1)}.
\end{eqnarray*}
\item $z = q / w$ is a simple pole arising from $(z w - q)$. Since $zw-q=0$ at
$z=q/w$, using Lemma \ref{kpole}, we have
\begin{eqnarray*}
&&\Res_{z=q/w}F_q(z,w)=\lim_{z\to q/w}\Pa{z-\frac qw}F_q(z,w)\\
&&=\lim_{z\to q/w}
\frac{z^2(1-z)^2w(1-w)^2}{(1-q)^4(1-qz)^2(z-q)^2(1-qw)^2(w-q)^2(1-qzw)(w-qz)(z-qw)}\\
&&=-\frac{w^3}{(q-1)^5q(q+1)(w^2-1)(q^2-w)^2(qw-1)^2(q^2-w^2)}.
\end{eqnarray*}
\end{itemize}
The other possible $\Set{\frac1q,\frac1{qw},\frac wq}$ poles are outside $\sfU(1)$ due to $\abs{q}<1$.
Hence
$$
\oint_{\sfU(1)}F_q(z,w)\dif z = 2\pi\mathrm{i}\Pa{\Res_{z=q}F_q(z,w)
+ \Res_{z=qw}F_q(z,w) + \Res_{z=q/w}F_q(z,w)}.
$$
After simplifying, one obtains
\begin{eqnarray*}
R_q(w)&:=&\Res_{z=q}F_q(z,w)+\Res_{z=qw}F_q(z,w)+\Res_{z=q/w}F_q(z,w)\\
&=&\frac{2w^2(w-1)^2\Pa{q^6w+q^4w+q^3(w+1)^2+q^2w+w}}{(q-1)^5(q+1)^3(q-w)^2(q^2-w)^2(q+w)(qw-1)^2(qw+1)(q^2w-1)^2}.
\end{eqnarray*}
Hence $\oint_{\sfU(1)}F_q(z,w)\dif z = 2\pi\mathrm{i}R_q(w)$,
and therefore
\begin{eqnarray*}
M_\G(q)&=& \frac1{8\pi\mathrm{i}}\oint_{\sfU(1)}R_q(w)\dif w.
\end{eqnarray*}
{\bf Step 3: In what follows, we compute the $w$-integral by
residues.} Now consider $R_q(w)$. There are three poles inside
$U(1)$.
\begin{itemize}
\item $w=q$ is a {double pole} from the factor $(q-w)^2$. Since the factor
$(q-w)^2=(w-q)^2$, we have
\begin{eqnarray*}
&&\Res_{w=q}R_q(w) = \partial_{w=q}\Br{(w-q)^2R_q(w)}\\
&&=\frac{-q^{12}+7q^{11}+13q^{10}+32q^9+51q^8+81q^7+82q^6+81q^5+51q^4+32q^3+13q^2+7q-1}{2(q-1)^{10}q(q+1)^6(q^2+1)^2(q^2+q+1)^3}
\end{eqnarray*}
\item $w=-q$ is a simple pole from the factor $(q+w)$. Since $q+w=0$ at
$w=-q$, it follows that
\begin{eqnarray*}
\Res_{w=-q}R_q(w) =
\lim_{w\to-q}(w+q)R_q(w)=\frac1{2(q-1)^6q(q+1)^4(q^2+1)^2}.
\end{eqnarray*}
\item $w=q^2$ is a double pole from the factor $(q^2-w)^2$. Since
$(q^2-w)^2=(w-q^2)^2$, it follows that
\begin{eqnarray*}
&&\Res_{w=q^2}R_q(w) = \partial_{w=q^2}\Br{(w-q^2)^2R_q(w)}\\
&&=-\frac{2q(q^2-q+1)\Pa{3q^4+6q^3+8q^2+6q+3}}{(q-1)^{10}(q+1)^4(q^2+1)^2(q^2+q+1)^3}.
\end{eqnarray*}
\end{itemize}
The other possible poles $\Set{\frac1q,-\frac1q,\frac1{q^2}}$ are outside $\sfU(1)$ due to $\abs{q}<1$.
Thus
\begin{eqnarray*}
\oint_{\sfU(1)}R_q(w)\dif w = 2\pi\mathrm{i}\Pa{\Res_{w=q}R_q(w) +
\Res_{w=-q}R_q(w) +\Res_{w=q^2}R_q(w)}.
\end{eqnarray*}
After carrying out these three computations and simplifying the
result, one gets
\begin{eqnarray*}
M_\G(q) &=& \frac14\Pa{\Res_{w=q}R_q(w) + \Res_{w=-q}R_q(w)+\Res_{w=q^2}R_q(w)}\\
&=&\frac{q^{10}-q^8-q^7+2 q^6+2 q^5+2
q^4-q^3-q^2+1}{(1-q)^{10}(1+q)^6(1+q^2)^2(1+q+q^2)^3}.
\end{eqnarray*}
This completes the proof.
\end{proof}

In the Bloch expansion, the $16$ parameters of the density matrix
$\rho$ fall into four categories: one parameter $t$, three
parameters $a_i$, three parameters $b_j$, and nine parameters
$c_{ij}$, where $i,j=1,2,3$. Under the action of the torus in
$\SU_A(2) \times \SU_B(2)$, these parameters transform according to
the eigenvalues of the corresponding representation: $t$ is
multiplied by $1$; $a_i$ are multiplied by $1, e^{\pm i\alpha}$,
respectively; $b_j$ are multiplied by $1, e^{\pm i\beta}$; and
$c_{ij}$ are multiplied by $1, e^{\pm i(\alpha+\beta)}, e^{\pm
i(\alpha-\beta)}$. These factors constitute the denominator of
$\det(\I - q  \bsT_{g,h})$ in the Molien integral.

To distinguish the degrees contributed by the four categories,
replace $q$ by $t,a,b,c$, respectively, turning each factor into $1
- t, 1 - a e^{\pm i\alpha}$, etc. Multiplying all factors together
and performing the contour integration yields the refined generating
function $M_\G(t,a,b,c)$:
\begin{eqnarray*}
M_\G(t,a,b,c)&=&\frac{1}{(4\pi\mathrm{i})^2}\oint_{\sfU(1)}\oint_{\sfU(1)}
\frac{(1-z)^2z^{-2}(1-w)^2w^{-2}}
{(1-t)(1-az)(1-a)(1-a\bar z)(1-b w)(1-b)(1-b\bar w)}\\
&&\times\frac{\dif z\dif w}{(1-cwz)(1-c w)(1-cw\bar
z)(1-cz)(1-c)(1-c\bar z)(1-cz\bar w)(1-c\bar w)(1-c\bar w\bar z)}\\
&=&\frac{1}{(4\pi\mathrm{i})^2(1-t)(1-a)(1-b)(1-c)}\oint_{\sfU(1)}\oint_{\sfU(1)}\frac{(1-z)^2z^2(1-w)^2w^2}
{(1-az)(z-a)(1-b w)(w-b)}\\
&&\times\frac{\dif z\dif w}{(1-cwz)(1-c
w)(z-cw)(1-cz)(z-c)(w-cz)(w-c)(zw-c)}
\end{eqnarray*}
Denote by $F_{a,b,c}(z,w)$ the following expression
\begin{eqnarray*}
\frac{(1-z)^2z^2(1-w)^2w^2} {(1-az)(z-a)(1-b w)(w-b)(1-cwz)(1-cw)(z-cw)(1-cz)(z-c)(w-cz)(w-c)(zw-c)}.
\end{eqnarray*}
Then it holds that
\begin{eqnarray*}
M_\G(t,a,b,c) &=& \dfrac{\oint_{\sfU(1)}\dif
w\oint_{\sfU(1)}F_{a,b,c}(z,w)\dif z}{(4\pi
\mathrm{i})^2(1-t)(1-a)(1-b)(1-c)}.
\end{eqnarray*}

For fixed $w\in\sfU(1)$, the poles of $F_{a,b,c}(z,w)$ in the
$z$-variable are $\{a,cw,c,c/w\}$, and all of them are simple pole. The other possible poles are outside $\sfU(1)$ due to the fact
$\abs{a}<1$ and $\abs{c}<1$. Hence
\begin{eqnarray*}
\oint_{\sfU(1)}F_{a,b,c}(z,w)\dif z =
2\pi\mathrm{i}\sum_{z\in\set{a,cw,c,c/w}}\Res(F_{a,b,c},z).
\end{eqnarray*}
By calculation, the sum of the four $z$-residues is given by {\scriptsize
$$
R_{a,b,c}(w):=-\frac{2 (w-1)^2 w^2 (a^2 c^5 w-a^2 c^4 w+a^2 c^3 w+a
        c^3 w^2+a c^3 w+a c^3-a c^2 w^2-a c^2 w-a c^2-c^2 w+c w-w)}{(a+1)
        (c-1) (c+1) (a c-1) (w-b) (b w-1) (c-w) (c^2-w) (c+w) (c w-1) (c
        w+1) (c^2 w-1) (a c-w) (a c w-1)}.
$$}
Then
\begin{eqnarray*}
M_\G(t,a,b,c)=\frac{\oint_{\sfU(1)}R_{a,b,c}(w)\dif
w}{8\pi\mathrm{i}(1-t)(1-a)(1-b)(1-c)}.
\end{eqnarray*}

Now consider $R_{a,b,c}(w)$, its poles inside $\sfU(1)$ are
$\set{c,c^2,b,-ca,-c}$, and all of them are simple pole. The other
possible poles are outside $\sfU(1)$ due to the fact that
$\abs{a}<1$ and $\abs{c}<1$. Thus
\begin{eqnarray*}
\oint_{\sfU(1)}R_{a,b,c}(w)\dif w =
2\pi\mathrm{i}\sum_{w\in\set{c,c^2,b,-ca,-c}}\Res(R_{a,b,c},w).
\end{eqnarray*}
By calculation, the sum of $w$-residues is given by {\scriptsize
$$
\frac{4(c^2(a^3 b^3 c^7-a^2 b^2 c^5 (a+b-1)+a b c^3(a^2+a b+b^2)-c^2
(a^2+a b+b^2)+a b c^4 (a b+a+b)-a b c (a+b+1)-a b+a+b)-1)}{(a+1)
(b+1) (c^2-1)^2 (c^2+1) (c^2+c+1) (a c-1) (a c+1) (a c^2-1) (b c-1)
(b c+1) (b c^2-1) (a b c-1)}.
$$}

In summary, we get that $M_\G(t,a,b,c)$ is identified with {\scriptsize
\begin{eqnarray*}
\frac{c^2(a^3 b^3 c^7-a^2 b^2 c^5 (a+b-1)+a b c^3 (a^2+a b+b^2)-c^2
(a^2+a b+b^2)+a b c^4 (a b+a+b)-a b c (a+b+1)-a b+a+b)-1}{(t-1)(a-1)
(a+1) (b-1) (b+1) (c-1)^3 (c+1)^2 (c^2+1) (c^2+c+1) (a c-1) (a c+1)
(a c^2-1) (b c-1) (b c+1) (b c^2-1) (a b c-1)}.
\end{eqnarray*}}
In fact, it can be rewritten as
\begin{align*}
&M_\G(t,a,b,c)\\
&=\frac{(1 + a b c^2 + a b c^3 + a^2 b^2 c^5)\Pa{1 + \frac{a^2
c^4}{1 - a^2 c^4} + \frac{b^2 c^4}{1 - b^2 c^4}} + \frac{a b^2 c^3 +
a^2 b c^4 + a^2 b c^5 + a^3 c^6}{1 - a^2 c^4} + \frac{a^2 b c^3 + a
b^2 c^4 + b^2 a c^5 + b^3 c^6}{1 - b^2 c^4}}{(1 - t)(1 - a^2)(1 -
b^2)(1 - c^2)(1 - abc)(1 - c^3)(1 - a^2c^2)(1 - b^2c^2)(1 - c^4)}
\end{align*}
If $t=a=b=c=q$, it degenerates into
\begin{eqnarray*}
M_\G(q,q,q,q) = \frac{q^{10}-q^8-q^7+2 q^6+2 q^5+2
q^4-q^3-q^2+1}{(q-1)^{10} (q+1)^6 (q^2+1)^2 (q^2+q+1)^3}.
\end{eqnarray*}

\section{An integrity basis constructed by the graphical method}\label{s4}

In the previous section, the closed-form expression of the Molien
series for the ring of local polynomial invariants of two-qubit
mixed states was obtained via contour integration based on Molien's
theorem. This series not only provides the dimension sequence of
homogeneous invariants in each degree, but also encodes the degree
information of primary and secondary invariants, thereby supplying
essential information for a complete characterization of the
structure of the invariant ring. However, to explicitly construct a
set of generators and clarify the algebraic relations among them, a
systematic method is still required.

In this section, we develop the graphical tensor method, which
transforms the tensor contraction process of the Bloch parameters
$a_\alpha$, $b_i$ and $c_{\alpha i}$ into the enumeration of
connected graphs composed of elementary building blocks. This method
exploits the invariant tensors $\delta_{ij}$ and $\varepsilon_{ijk}$
of $\mathrm{SO}(3)$ to classify all possible complete contractions,
thereby avoiding the tediousness and omissions inherent in purely
algebraic enumeration. In Subsection~\ref{graphical_rules} below, we
first present the basic rules and building blocks of the graphical
method; then in Subsection~\ref{integrity-basis}, we enumerate all
connected graphs case by case, screen and reduce the candidate
invariants, and finally obtain an integrity basis consisting of 21
generators.

\subsection{Basic rules and building blocks of the graphical method}\label{graphical_rules}

In the Bloch representation, the action of the single-qubit local
transformation group $\SU(2)$ on the parameter space is identical to
the rotation action of $\SO(3)$. This follows from the Lie group
homomorphism $\SU(2) \to \SO(3)$, i.e., $\SU(2)$ is the double cover
of $\SO(3)$\cite{Hall2003}. In a two-qubit system, the local unitary
action of $\SU_A(2) \times \SU_B(2)$ is equivalent to the orthogonal
transformation action of $\SO_A(3) \times \SO_B(3)$ \cite{Jing2016}.
Specifically, this action induces a transformation on the Bloch
parameters, where each $g\ot h \in \SU_A(2) \times \SU_B(2)$
corresponds to a unique $\bsO_g \ot \bsO_h \in \SO_A(3)\times
\SO_B(3)$, such that
$$
\bsa' = \bsO_g \bsa, \qquad \bsb' = \bsO_h \bsb, \qquad \bsC' =
\bsO_g \bsC \bsO_h^\t.
$$
This fact reduces the local unitary equivalence problem of operators
to an orthogonal invariant problem of vectors on the Bloch parameter
space.

Since any invariant must be a scalar, it cannot possess any free
indices. Therefore, any candidate invariant constructed from
$\bsa,\bsb$, and $\bsC$ must have all its indices fully contracted.
For example, expressions such as $\inner{\bsa}{\bsa}$ and
$\inner{\bsb}{\bsb}$ are invariant under independent $\SO(3)$
rotations of each qubit, because the rotation matrices cancel
pairwise via $\sum_j \bsO_{ij} (\bsO^\t)_{jk} = \delta_{ik}$.
Similarly, for cross contractions such as
$\innerm{\bsa}{\bsC}{\bsb}$, the rotation matrices acting on the
$A$-space and the $B$-space cancel independently via the same
orthogonality relation.

On the other hand, any term with uncontracted indices, such as an
isolated $a_\alpha$ or $c_{\alpha i}$, transforms nontrivially under
rotations and hence cannot itself be a scalar invariant. However,
this does not prevent these tensors from appearing in invariant
expressions, provided that all indices are fully contracted, as in
$\sum_{\alpha} a_\alpha a_\alpha=\Inner{\bsa}{\bsa}$ or
$\sum_{\alpha,i}a_\alpha c_{\alpha i}
b_i=\innerm{\bsa}{\bsC}{\bsb}$. This principle reduces the problem
of finding all polynomial invariants to that of enumerating all
possible complete contractions of the tensors $a_\alpha, b_i$, and
$c_{\alpha i}$.

As noted in \cite{Almumin2020}, it is known that for the group
$\SO(3)$, the invariant tensors are $\delta_{ij}$ and
$\varepsilon_{ijk}$. These two tensors generate all polynomial
invariants of the fundamental representation of $\SO(3)$. In the
context of two-qubit system, this implies that any $\G$-invariant
polynomial constructed from the Bloch vectors $\bsa, \bsb$, and the
correlation matrix $\bsC$ can be obtained by taking all possible
complete contractions of indices using $\delta_{ij}$ and
$\varepsilon_{ijk}$.

The discussion above reduces the problem of finding all polynomial
invariants to that of enumerating all possible complete contractions
of the tensors $t, a_\alpha, b_i, c_{\alpha i}$, and
$c_{i\alpha}^\t$, using $\delta^A_{\alpha\beta}, \delta^B_{ij},
\varepsilon^A_{\alpha\beta\gamma}$, and $\varepsilon^B_{ijk}$. As
the number of independent contractions grows rapidly with the
polynomial degree, a purely algebraic enumeration becomes cumbersome
and prone to omission. It is therefore convenient to adopt a
graphical representation of tensor contractions, in which each
invariant corresponds to a diagram built from elementary vertices
and index contractions. This method has been applied in \cite{Welsh}
and \cite{Penrose}. The basic building blocks used in constructing
the invariants are shown in the following
Table~\ref{building_blocks}.

\begin{table}[htbp]
\centering
\caption{The building blocks for the graphical
construction of invariants.}
\label{building_blocks}
\begin{tabular}{cc p{9cm}}
\toprule
\textbf{Symbol} & \textbf{Graphical representation} & \textbf{Meaning} \\
\midrule
$t$ & isolated vertex          & scalar, no indices \\
$a_\alpha$ &  \raisebox{-0.4\height}{\includegraphics[trim=14 22 9 22, clip, width=1.1cm]{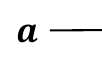}}    & first-order tensor, carrying an $\SO_A(3)$ index $\alpha$.  \\
$b_i$ &\raisebox{-0.4\height}{\includegraphics[trim=14 22 9 22, clip, width=1.1cm]{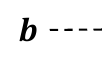}}    & first-order tensor, carrying an $\SO_B(3)$ index $i$. \\
$c_{\alpha i}$ or $c^\t_{i\alpha}$&\raisebox{-0.4\height}{\includegraphics[trim=0 22 0 22, clip, width=1.6cm]{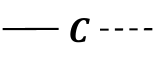}}  & second-order tensor, carrying $\alpha \in \SO_A(3)$, $i\in \SO_B(3)$.\\
$\varepsilon^A_{\alpha\beta\gamma}$  &\raisebox{-0.75\height}{\includegraphics[trim=0 0 0 31, clip, width=1.6cm]{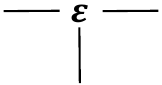}}   & third-order antisymmetric tensor, carrying three $\SO_A(3)$ indices.  \\
$\varepsilon^B_{ijk}$  &\raisebox{-0.75\height}{\includegraphics[trim=0 0 0 31, clip, width=1.6cm]{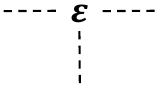}}     & third-order antisymmetric tensor, carrying three $\SO_B(3)$ indices.\\
        \bottomrule
\end{tabular}
\end{table}

In the graphical method, index contractions are represented by
connecting lines. Once two legs carrying the same type of $\SO(3)$
indices are connected, this is equivalent to a contraction with the
Kronecker delta $\delta^A_{\alpha\beta}$ or $\delta^B_{ij}$. Hence
the tensor need not appear as an independent building block. In
contrast, the third-order Levi-Civita tensors
$\varepsilon^A_{\alpha\beta\gamma}$ and $\varepsilon^B_{ijk}$ cannot
be represented by connecting lines alone. They must therefore appear
explicitly as building blocks in the graphs.

The invariants are constructed by connecting the legs of the basic
building blocks pairwise until no free edges remain in the connected
graph. Each connecting line represents a contraction of a pair of
indices of the same $\SO(3)$ type. Solid lines connect legs carrying
$\SO_A(3)$ indices, while dashed lines connect legs carrying
$\SO_B(3)$ indices. When translating graphs into algebraic
expressions, the repeated index convention is adopted: any index
appearing twice in the same term is summed over its full range of
values, so that each connecting line corresponds to a summation. A
complete graph with no free indices represents a scalar polynomial,
which is an $\SO(3) \times \SO(3)$ invariant. All polynomial
invariants can be obtained by enumerating all possible connection
patterns.

\subsection{The complete set of generators}
\label{integrity-basis}

To obtain a finite set of generators for the invariant ring, in this
section we construct candidate invariants by enumerating all
possible connected graphs. In translating diagrams into algebraic
expressions, we adopt the notation from \cite{Zhang2025} for the
invariants. In particular, we write
\begin{eqnarray}\label{aFbF}
\bsa \cdot \cF^A = \sum_{\alpha=1}^3 a_\alpha \bsF_\alpha^A, \quad
\bsb \cdot \cF^B = \sum_{i=1}^3 b_i \bsF_i^B,
\end{eqnarray}
where $\cF^A = (\bsF_1^A, \bsF_2^A, \bsF_3^A)$ and $\cF^B =
(\bsF_1^B, \bsF_2^B, \bsF_3^B)$ are vectors of matrices
$\bsF^A_\alpha:=(\varepsilon_{\alpha\beta\gamma}^A)_{\beta,\gamma}$
and $\bsF_i^B:=(\varepsilon_{ijk}^B)_{j,k}$, respectively. It holds
that $(\bsa\cdot\cF)^\t=-\bsa\cdot\cF$. The corresponding graphical
representation for Eq.~\eqref{aFbF} is shown in Figure~\ref{aF}. We
also denote by $\bsM^*$ the \emph{adjugate matrix} of any matrix
$\bsM$, and set $\widehat{\bsM}:=(\bsM^*)^\t$, which satisfies $\bsM
\widehat{\bsM}^\t = \widehat{\bsM}^\t \bsM = \det(\bsM)\I_3$. In
Linear Algebra, for any two square matrices $\bsM$ and $\bsN$ of
order $n$, it is well known that
$\widehat{\bsM^\t}=\widehat{\bsM}^\t$ and
$\widehat{\bsM\bsN}=\widehat{\bsM}\widehat{\bsN}$.

\begin{figure}[H]  
\centering
\includegraphics[trim=5 10 1 18, clip, width=0.4\textwidth]{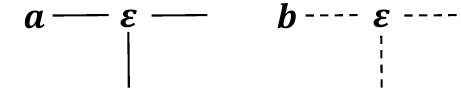}
\caption{The graphical representation of $\bsa\cdot\cF^A$ and
$\bsb\cdot\cF^B$.} \label{aF}
\end{figure}

Since the only invariant involving the scalar $t$ is $t$ itself, we
can only consider the remaining five building blocks. Denote
$\Omega:=\Set{\bsa,\bsb,\bsC,\varepsilon^A,\varepsilon^B}$.
According to whether each tensor in $\Omega$ appears or not in an
invariant, there are $31$ potential possibilities, which we list
below.

\begin{itemize}
\item Those containing single tensor in $\Omega$ (there are $\binom{5}{1}$ possibilities): (1) $\bsa$, (2) $\bsb$, (3) $\bsC$, (4) $\varepsilon^A$, (5) $\varepsilon^B$;
\item Those containing exactly two tensors in $\Omega$ (there are $\binom{5}{2}$ possibilities): (6) $\bsa, \bsb$, (7) $\bsa, \bsC$, (8) $\bsb,\bsC$, (9) $\bsa, \varepsilon^A$, (10) $\bsa, \varepsilon^B$, (11) $\bsb, \varepsilon^A$, (12) $\bsb, \varepsilon^B$, (13) $\bsC, \varepsilon^A$, (14) $\bsC, \varepsilon^B$, (15) $\varepsilon^A, \varepsilon^B$;
\item Those containing exactly three tensors in $\Omega$ (there are $\binom{5}{3}$ possibilities): (16) $\bsa, \bsb, \bsC$, (17) $\bsC, \varepsilon^A, \varepsilon^B$, (18) $\bsa, \varepsilon^A, \varepsilon^B$, (19) $\bsb, \varepsilon^A, \varepsilon^B$, (20) $\bsa,\bsb, \varepsilon^A$, (21) $\bsa, \bsb, \varepsilon^B$, (22) $\bsb, \bsC, \varepsilon^A$, (23) $\bsa, \bsC, \varepsilon^B$, (24) $\bsb, \bsC, \varepsilon^B$, (25) $\bsa, \bsC, \varepsilon^A$;
\item Those containing exactly four tensors in $\Omega$ (there are $\binom{5}{4}$ possibilities): (26) $\bsa, \bsb, \bsC, \varepsilon^A$, (27) $\bsa, \bsb, \bsC, \varepsilon^B$, (28) $\bsa, \bsb, \varepsilon^A, \varepsilon^B$, (29) $\bsa, \bsC, \varepsilon^A, \varepsilon^B$, (30) $\bsb, \bsC, \varepsilon^A, \varepsilon^B$;
\item Those containing all five tensors in $\Omega$ (there are $\binom{5}{5}$ possibilities): (31) $\bsa, \bsb, \bsC, \varepsilon^A, \varepsilon^B$.
\end{itemize}
For clarity, we summarize the above 31 possible cases in Table \ref{cases}.

\begin{table}[H]
    \centering
    \caption{$31$ potential possibilities.}
    \begin{tabular}{c|c|c|c|c}
        \hline
        only one type & only two types & only three types & only four types & all five types \\
        \hline
        (1) $\bsa$ & (6) $\bsa, \bsb$& (16) $\bsa, \bsb, \bsC$ &(26) $\bsa, \bsb, \bsC, \varepsilon^A$& (31) $\bsa, \bsb, \bsC, \varepsilon^A, \varepsilon^B$\\
        \hline
        (2) $\bsb$ &(7) $\bsa, \bsC$ &(17) $\bsC, \varepsilon^A, \varepsilon^B$ & (27) $\bsa, \bsb, \bsC, \varepsilon^B$&\\
        \hline
        (3) $\bsC$ & (8) $\bsb,\bsC$ &(18) $\bsa, \varepsilon^A, \varepsilon^B$&(28) $\bsa, \bsb, \varepsilon^A, \varepsilon^B$& \\
        \hline
        (4) $\varepsilon^A$ & (9) $\bsa, \varepsilon^A$ & (19) $\bsb, \varepsilon^A, \varepsilon^B$ &(29) $\bsa, \bsC, \varepsilon^A, \varepsilon^B$&\\
        \hline
        (5) $\varepsilon^B$ & (10) $\bsa, \varepsilon^B$ & (20) $\bsa,\bsb, \varepsilon^A$& (30) $\bsb, \bsC, \varepsilon^A, \varepsilon^B$& \\
        \hline
            & (11) $\bsb, \varepsilon^A$ & (21) $\bsa, \bsb, \varepsilon^B$&  &\\
        \hline
          & (12) $\bsb, \varepsilon^B$ & (22) $\bsb, \bsC, \varepsilon^A$ &  &\\
        \hline
          & (13) $\bsC, \varepsilon^A$ & (23) $\bsa, \bsC, \varepsilon^B$& &\\
        \hline
        & (14) $\bsC, \varepsilon^B$ & (24) $\bsb, \bsC, \varepsilon^B$& &\\
        \hline
        & (15) $\varepsilon^A, \varepsilon^B$ & (25) $\bsa, \bsC, \varepsilon^A$&  &\\
        \hline
    \end{tabular}\label{cases}
\end{table}

To get invariants, in tensor contractions, $A$-type indices can only
contract with $A$-type indices, and $B$-type indices only with
$B$-type indices, the cases (6), (10), (11), (15), (18), (19), (20),
(21), and (28) are excluded because they contain both $A$-type index
and $B$-type index but lack $\bsC$ as a mediator to form a connected
graph\footnote{A disconnected graph can be decomposed into multiple connected components, each of which is itself a connected subgraph. Hence a disconnected graph corresponds to a product of lower-degree invariants and is redundant as a generator.}. Among the
remaining cases, further exclusions are made when the corresponding
invariant either vanishes identically or reduces to a polynomial
combination of lower-degree invariants.

For convenience, vectors transforming under the two different groups
$\SO_A(3)$ and $\SO_B(3)$ are referred to as $A$-type vectors and
$B$-type vectors, respectively. Here, $A$-type indices are denoted
by $\alpha,\beta,\gamma,\dots$, while $B$-type indices are denoted
by $i,j,k,\dots$. Thus, a $A$-type vector $\bsA$ is a quantity
carrying one free $A$-type index, taking the form $a_\alpha, b_i
c_{i\alpha}^\t, a_\beta c_{\beta i} c_{i\alpha}^\t, \ldots$, where
$\bsa$ and $\bsb$ are contracted with products of even and odd
numbers of matrices $\bsC$ or $\bsC^\t$, respectively. Similarly, a
$B$-type vector $\bsB$ is a quantity carrying one free $B$-type
index, taking the form $b_i, a_\alpha c_{\alpha i}, b_j
c_{j\alpha}^\t c_{\alpha i}, \ldots$, where $\bsa$ and $\bsb$ are
contracted with products of odd and even numbers of matrices $\bsC$
or $\bsC^\t$, respectively. The graphical representations of vectors
of $A$-type and $B$-type constructed from $\bsa, \bsb, \bsC$ are
shown in the Figure~\ref{AB}.

\begin{figure}[htbp]  
\centering
\includegraphics[trim=5 9 7 10, clip, width=0.75\textwidth]{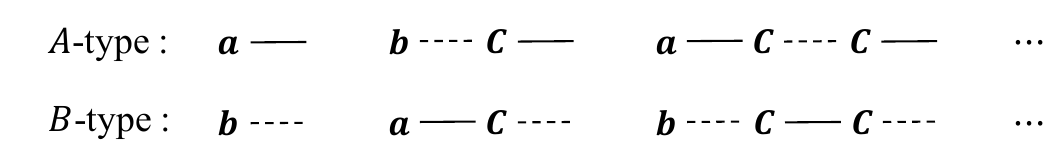}
\caption{The graphical representation of some $A$- and $B$-type
vectors.} \label{AB}
\end{figure}

To eliminate those candidates that vanish identically or reduce to
products of lower-degree invariants, we need several reduction
identities. The following proposition provides a key identity,
showing that any chain-like subgraph containing more than four
$\bsC$-nodes can be reduced to a combination of shorter chains. This
fact will significantly reduce the number of candidates in the
subsequent enumeration.

\begin{prop}\label{4c}
For any $\bsM\in\bbR^{3\times 3}$, it holds that
\begin{eqnarray*}
\bsM\bsM^\t\bsM\bsM^\t\bsM=\Inner{\bsM}{\bsM} \bsM\bsM^\t\bsM
+\det(\bsM)\widehat\bsM -
\tfrac12\Br{\Inner{\bsM}{\bsM}^2-\Inner{\bsM^\t\bsM}{\bsM^\t\bsM}}\bsM.
\end{eqnarray*}
\end{prop}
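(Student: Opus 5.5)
The identity follows from the Cayley--Hamilton theorem applied to the symmetric matrix $\bsN:=\bsM\bsM^\t$, combined with the multiplicativity of the adjugate recalled just before Figure~\ref{aF}. The left-hand side is $\bsN^2\bsM$, so the strategy is to express $\bsN^2$ through lower powers of $\bsN$ and its adjugate, and then multiply on the right by $\bsM$.

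\textbf{Step 1.} For any $3\times 3$ matrix $\bsN$, write $e_1=\Tr{\bsN}$, $e_2=\tfrac12\Br{(\Tr{\bsN})^2-\Tr{\bsN^2}}$ and $e_3=\det\bsN$. The Cayley--Hamilton theorem gives the polynomial identity
$$
\bsN^*=\bsN^2-e_1\bsN+e_2\I_3 .
$$
For invertible $\bsN$ this follows by multiplying $\bsN^3-e_1\bsN^2+e_2\bsN-e_3\I_3=0$ by $\bsN^{-1}$ and using $\bsN^{-1}=\bsN^*/e_3$. Both sides are polynomial in the entries of $\bsN$, and invertible matrices are dense, so the identity holds for all $\bsN$. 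Rearranged, it reads $\bsN^2=e_1\bsN-e_2\I_3+\bsN^*$.

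\textbf{Step 2.} Specialise to $\bsN=\bsM\bsM^\t$ and identify the coefficients. Using the Frobenius inner product $\Inner{\bsX}{\bsY}=\Tr{\bsX^\t\bsY}$:
\begin{itemize}
\item $e_1=\Tr{\bsM\bsM^\t}=\Inner{\bsM}{\bsM}$;
\item by cyclicity, $\Tr{\bsN^2}=\Tr{\bsM^\t\bsM\bsM^\t\bsM}=\Inner{\bsM^\t\bsM}{\bsM^\t\bsM}$;
\item hence $e_2=\tfrac12\Br{\Inner{\bsM}{\bsM}^2-\Inner{\bsM^\t\bsM}{\bsM^\t\bsM}}$.
\end{itemize}

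\textbf{Step 3.} Multiply the rearranged identity on the right by $\bsM$:
$$
\bsN^2\bsM=\Inner{\bsM}{\bsM}\,\bsM\bsM^\t\bsM-e_2\bsM+\bsN^*\bsM .
$$
It remains to show $\bsN^*\bsM=\det(\bsM)\widehat\bsM$. Use $(\bsX\bsY)^*=\bsY^*\bsX^*$ and $(\bsM^\t)^*=(\bsM^*)^\t=\widehat\bsM$. Then
$$
(\bsM\bsM^\t)^*=\widehat\bsM\,\bsM^* .
$$
Since $\bsM^*\bsM=\det(\bsM)\I_3$, this gives
$$
(\bsM\bsM^\t)^*\bsM=\widehat\bsM\,\bsM^*\bsM=\det(\bsM)\widehat\bsM .
$$
Substituting back yields exactly the claimed identity.

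\textbf{Main obstacle.} The only delicate point is the bookkeeping of adjugates: the order reversal in $(\bsX\bsY)^*=\bsY^*\bsX^*$ and the transpose convention in $\widehat\bsM=(\bsM^*)^\t$. The paper's remark $\widehat{\bsM\bsN}=\widehat\bsM\widehat\bsN$ must be transposed consistently. Each such identity can be checked first for invertible matrices, using $\bsX^*=\det(\bsX)\bsX^{-1}$, and then extended to all $\bsM$ by density, since everything is polynomial in the entries.
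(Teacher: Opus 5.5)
Your proof is correct and is essentially the paper's argument. The paper quotes the adjugate form of Cayley--Hamilton for $\bsM^\t\bsM$ (Corollary B1 of Zhang et al.) and multiplies on the left by $\bsM$, using $\bsM\widehat{\bsM^\t\bsM}=\det(\bsM)\widehat\bsM$ and $\Inner{\widehat\bsM}{\widehat\bsM}=\tfrac12[\Inner{\bsM}{\bsM}^2-\Inner{\bsM^\t\bsM}{\bsM^\t\bsM}]$; you run the mirror-image version with $\bsM\bsM^\t$ multiplied on the right, and you derive these ingredients yourself (Cayley--Hamilton plus density, Newton's identity for $e_2$, and $(\bsM\bsM^\t)^*\bsM=\widehat\bsM\bsM^*\bsM$), which makes the argument self-contained.
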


\begin{proof}
Recall from \cite[Corollary B1]{Zhang2025} that
$(\bsM^\t\bsM)^2=\Inner{\bsM}{\bsM}\bsM^\t\bsM+\widehat{\bsM^\t\bsM}
- \Inner{\widehat\bsM}{\widehat\bsM}\I_3$. Multiplying on the left
by $\bsM$ gives
\begin{eqnarray*}
\bsM\bsM^\t\bsM\bsM^\t\bsM &=& \Inner{\bsM}{\bsM}\bsM\bsM^\t\bsM
+\bsM\widehat{\bsM^\t\bsM}- \Inner{\widehat\bsM}{\widehat\bsM}\bsM\\
&=&\Inner{\bsM}{\bsM}\bsM\bsM^\t\bsM +\det(\bsM)\widehat{\bsM} -
\Inner{\widehat\bsM}{\widehat\bsM}\bsM.
\end{eqnarray*}
By the same corollary,
$$
\Inner{\widehat\bsM}{\widehat\bsM} =
\frac12\Br{\Inner{\bsM}{\bsM}^2-\Inner{\bsM^\t\bsM}{\bsM^\t\bsM}}.
$$
The desired identity follows.
\end{proof}

This conclusion is more restrictive than the limitations of the
Cayley-Hamilton theorem. Because multiplying both sides of the
equation by $\bsM^\t$ leads to the following characteristic equation
of $\bsM\bsM^\t$.
\begin{cor}
For any $\bsM\in\bbR^{3\times 3}$, it holds that
\begin{eqnarray*}
(\bsM\bsM^\t)^3=\Inner{\bsM}{\bsM} (\bsM\bsM^\t)^2-
\tfrac12\Br{\Inner{\bsM}{\bsM}^2-\Inner{\bsM^\t\bsM}{\bsM^\t\bsM}}\bsM\bsM^\t
+\det(\bsM)^2\I_3 .
\end{eqnarray*}
\end{cor}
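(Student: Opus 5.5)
The plan is to derive the corollary directly from Proposition~\ref{4c}, as the remark preceding the statement suggests. I will multiply the identity of Proposition~\ref{4c} on the right by $\bsM^\t$ and then simplify each of the four resulting terms.

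Multiplying the left-hand side $\bsM\bsM^\t\bsM\bsM^\t\bsM$ on the right by $\bsM^\t$ gives $(\bsM\bsM^\t)^3$. On the right-hand side, the first term becomes $\Inner{\bsM}{\bsM}\,\bsM\bsM^\t\bsM\bsM^\t=\Inner{\bsM}{\bsM}(\bsM\bsM^\t)^2$. The last term becomes $-\tfrac12\Br{\Inner{\bsM}{\bsM}^2-\Inner{\bsM^\t\bsM}{\bsM^\t\bsM}}\bsM\bsM^\t$. Both of these already have the required form.

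The only step that needs care is the adjugate term $\det(\bsM)\widehat\bsM\bsM^\t$. By the relation stated earlier, $\bsM\widehat\bsM^\t=\widehat\bsM^\t\bsM=\det(\bsM)\I_3$, so I need $\widehat\bsM\bsM^\t=\det(\bsM)\I_3$. Taking transposes of $\bsM\widehat\bsM^\t=\det(\bsM)\I_3$ gives exactly $\widehat\bsM\bsM^\t=\det(\bsM)\I_3$. Hence this term equals $\det(\bsM)^2\I_3$.

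Collecting the four terms yields the stated cubic identity for $\bsM\bsM^\t$. As an optional check, I would confirm that the coefficients agree with the Cayley--Hamilton theorem for $\bsN=\bsM\bsM^\t$. We have $\Tr{\bsN}=\Inner{\bsM}{\bsM}$. The second elementary symmetric function is $\tfrac12\Br{(\Tr{\bsN})^2-\Tr{\bsN^2}}$, and $\Tr{\bsN^2}=\Inner{\bsM^\t\bsM}{\bsM^\t\bsM}$. Finally, $\det\bsN=\det(\bsM)^2$. This consistency check also clarifies the remark before the corollary.

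There is no real obstacle. The one point to watch is applying the adjugate relation on the correct side, which the transpose argument handles.
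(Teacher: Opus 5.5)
Your proposal is correct and matches the paper's argument: right-multiply Proposition~\ref{4c} by $\bsM^\t$. The only nontrivial point is $\widehat\bsM\bsM^\t=(\bsM\widehat\bsM^\t)^\t=\det(\bsM)\I_3$, which the paper leaves implicit and you handle correctly. Your Cayley--Hamilton check for $\bsM\bsM^\t$ is by itself already a complete independent proof, since the stated identity is exactly the characteristic equation of $\bsM\bsM^\t$.
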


In the diagrammatic approach, certain contractions involving
subgraphs of cross products can be further simplified. Specifically,
the following proposition shows that the action of the matrix $\bsM$
on the cross product $\bsu \times \bsv$ can be decomposed into three
terms, each involving only standard matrix-vector multiplications.
\begin{prop}\label{p43}
For any $\bsM\in\bbR^{3\times3}$ and $\bsu,\bsv\in\bbR^3$, it holds
that
\begin{eqnarray*}
\bsM(\bsu\times\bsv) = \Tr{\bsM}(\bsu\times\bsv) -
(\bsM^\t\bsu)\times \bsv - \bsu\times(\bsM^\t\bsv).
\end{eqnarray*}
Moreover $\bsM\bsu\times\bsM\bsv = \widehat\bsM(\bsu\times\bsv)$.
\end{prop}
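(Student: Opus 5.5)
The plan is to write everything in index notation with $\varepsilon_{ijk}$, so that $(\bsu\times\bsv)_i=\varepsilon_{ijk}u_jv_k$, and to reduce both claims to standard Levi-Civita identities. We work over $\bbR^3$ with the standard inner product.

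\textbf{First identity.} It is linear in $\bsM$, so we could check it on the elementary matrices $\bse_p\bse_q^\t$. A more conceptual route uses the totally antisymmetric $4$-index vanishing identity in three dimensions:
$$
\delta_{il}\varepsilon_{jkm}-\delta_{jl}\varepsilon_{ikm}+\delta_{kl}\varepsilon_{ijm}-\delta_{ml}\varepsilon_{ijk}=0 .
$$
This holds because any rank-$4$ alternating expression in dimension $3$ vanishes. The steps are:
\begin{itemize}
\item Contract this identity with $M_{li}\,u_j v_k$; the first term gives $M_{ll}\varepsilon_{jkm}u_jv_k=\Tr{\bsM}(\bsu\times\bsv)_m$.
\item The second and third terms give $-\varepsilon_{ikm}(\bsM^\t\bsu)_i v_k$ and $\varepsilon_{ijm}u_j(\bsM^\t\bsv)_i$. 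After reordering indices via antisymmetry, these are $-((\bsM^\t\bsu)\times\bsv)_m$ and $-(\bsu\times(\bsM^\t\bsv))_m$.
\item The last term gives $-M_{mi}\varepsilon_{ijk}u_jv_k=-(\bsM(\bsu\times\bsv))_m$.
\end{itemize}
Rearranging then yields the stated formula. The only care needed is tracking signs when permuting $\varepsilon$ indices and the placement of the transpose, i.e.\ $M_{li}u_l=(\bsM^\t\bsu)_i$ versus $M_{il}u_l=(\bsM\bsu)_i$. I would sanity-check the signs with $\bsM=\I_3$: the formula becomes $\bsu\times\bsv=3\,\bsu\times\bsv-\bsu\times\bsv-\bsu\times\bsv$, which is correct.

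\textbf{Second identity.} I would use the determinant identity
$$
\varepsilon_{ijk}M_{ip}M_{jq}M_{kr}=\det(\bsM)\,\varepsilon_{pqr}
$$
together with the cofactor expression $(\bsM^*)_{ri}=\tfrac12\varepsilon_{ijk}\varepsilon_{rqs}M_{jq}M_{ks}$, so that $\widehat\bsM_{ir}=\tfrac12\varepsilon_{ijk}\varepsilon_{rqs}M_{jq}M_{ks}$. The steps are:
\begin{itemize}
\item Write $(\bsM\bsu\times\bsM\bsv)_i=\varepsilon_{ijk}M_{jq}M_{ks}u_qv_s$.
\item Antisymmetrize in $(q,s)$, which is allowed because $\varepsilon_{ijk}$ is antisymmetric in $j,k$. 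This gives $\tfrac12\varepsilon_{ijk}M_{jq}M_{ks}(u_qv_s-u_sv_q)$.
\item Apply $u_qv_s-u_sv_q=\varepsilon_{rqs}(\bsu\times\bsv)_r$, which turns the expression into $\widehat\bsM_{ir}(\bsu\times\bsv)_r$.
\end{itemize}
Alternatively, the standard fact $\bsM^\t(\bsM\bsu\times\bsM\bsv)=\det(\bsM)(\bsu\times\bsv)$, which follows from the determinant identity, combined with $\bsM\widehat\bsM^\t=\det(\bsM)\I_3$ from the text, proves the claim for invertible $\bsM$. Density of invertible matrices and continuity then extend it to all $\bsM$.

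The main obstacle is purely bookkeeping: fixing the convention for $\widehat\bsM=(\bsM^*)^\t$, which is the cofactor matrix, and keeping the index orders consistent so that no spurious sign or transpose appears.
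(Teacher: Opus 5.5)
Your proof is correct, and your signs check out.

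For the first identity, your argument is essentially the paper's first proof written in index form. Contract your identity once more with $w_m$: this gives exactly $\det(\bsM^\t\bsu,\bsv,\bsw)+\det(\bsu,\bsM^\t\bsv,\bsw)+\det(\bsu,\bsv,\bsM^\t\bsw)=\Tr{\bsM}\det(\bsu,\bsv,\bsw)$. The paper tests against an arbitrary $\bsw$ and uses this trace identity, but simply asserts it. Your vanishing-alternating-tensor argument therefore supplies the justification the paper omits.

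For the second identity, your fallback route is precisely the paper's: $\bsM^\t(\bsM\bsu\times\bsM\bsv)=\det(\bsM)(\bsu\times\bsv)$ for invertible $\bsM$, then density. Your main route is genuinely different and somewhat cleaner. You write $\widehat\bsM_{ir}=\tfrac12\varepsilon_{ijk}\varepsilon_{rqs}M_{jq}M_{ks}$ and antisymmetrize in $(q,s)$. This proves the claim as a polynomial identity valid for every $\bsM$ at once, so no approximation by invertible matrices is needed. Your convention is also the right one. Contracting that expression with $M_{ip}$ gives $\bsM^\t\widehat\bsM=\det(\bsM)\I_3$, which is the transpose of the paper's $\widehat\bsM^\t\bsM=\det(\bsM)\I_3$.

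The paper's second proof runs in the opposite direction. It obtains the first identity by setting $\bsN=\I_3$ (with $\bsM\to\bsM^\t$) in the polarized form $\bsM\bsu\times\bsN\bsv+\bsN\bsu\times\bsM\bsv=(\widehat{\bsM+\bsN}-\widehat\bsM-\widehat\bsN)(\bsu\times\bsv)$ of the second identity. That makes visible that the two statements are really one fact. Your index proofs establish them independently, which is more self-contained but hides this link.
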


For this proposition, there are two different methods of proof.
\begin{proof}[The first proof]
Recall that, for any $\bsx,\bsy,\bsz\in\bbR^3$, we see that
$\Inner{\bsx\times\bsy}{\bsz} = \det\Pa{\bsx,\bsy,\bsz}$. Note that
\begin{itemize}
\item $\Inner{\bsw}{\bsM(\bsu\times\bsv)}=\Inner{\bsM^\t\bsw}{\bsu\times\bsv}=\Inner{\bsu\times\bsv}{\bsM^\t\bsw}=\det(\bsu,\bsv,\bsM^\t\bsw)$,
\item $\Inner{\bsw}{(\bsM^\t\bsu)\times \bsv}=\Inner{\bsv\times\bsw}{\bsM^\t\bsu}=\Inner{\bsM^\t\bsu}{\bsv\times\bsw}=\det(\bsM^\t\bsu,\bsv,\bsw)$,
\item $\Inner{\bsw}{\bsu\times(\bsM^\t\bsv)}=\Inner{\bsu\times(\bsM^\t\bsv)}{\bsw}=\det(\bsu,\bsM^\t\bsv,\bsw)$.
\end{itemize}
Summing three dot products up, we get that
\begin{eqnarray*}
&&\Inner{\bsw}{\bsM(\bsu\times\bsv)+(\bsM^\t\bsu)\times
            \bsv+\bsu\times(\bsM^\t\bsv)}\\
&&=\det(\bsM^\t\bsu,\bsv,\bsw)+\det(\bsu,\bsM^\t\bsv,\bsw)+\det(\bsu,\bsv,\bsM^\t\bsw)\\
&&=\Tr{\bsM}\det(\bsu,\bsv,\bsw)=\Tr{\bsM}\Inner{\bsw}{\bsu\times\bsv}
\end{eqnarray*}
holds for any $\bsw\in\bbR^3$. This leads to the conclusion that
$$
\bsM(\bsu\times\bsv)+(\bsM^\t\bsu)\times
\bsv+\bsu\times(\bsM^\t\bsv)=\Tr{\bsM}\bsu\times\bsv.
$$
For the second item, we choose any $\bsw\in\bbR^3$, consider the
following calculation,
\begin{eqnarray*}
\Inner{\bsM\bsw}{\bsM\bsu\times\bsM\bsv}&=&
\det(\bsM\bsu,\bsM\bsv,\bsM\bsw) =
\det(\bsM(\bsu,\bsv,\bsw))\\
&=&\det(\bsM)\det(\bsu,\bsv,\bsw)=\Inner{\det(\bsM)\bsw}{\bsu\times\bsv}\\
&=&
\Inner{\widehat\bsM^\t\bsM\bsw}{\bsu\times\bsv}=\Inner{\bsM\bsw}{\widehat\bsM(\bsu\times\bsv)},
\end{eqnarray*}
implying that $\bsM\bsu\times\bsM\bsv=\widehat\bsM(\bsu\times\bsv)$
if $\bsM$ is invertible. When $\bsM$ is not invertible, we can find
a net of invertible matrices $\set{\bsM_\epsilon}_\epsilon$ such
that $\bsM=\lim_{\epsilon\to0}\bsM_\epsilon$. Thus
$$
\bsM_\epsilon\bsu\times\bsM_\epsilon\bsv =
\widehat\bsM_\epsilon(\bsu\times\bsv).
$$
By taking the limit with respect to $\epsilon$ approaching $0$, we
get that $\bsM\bsu\times\bsM\bsv=\widehat{\bsM}(\bsu\times\bsv)$.
\end{proof}

\begin{proof}[The second proof]
By in \cite[Proposition B1]{Zhang2025}, we can get that
$\bsM^*=\widehat{\bsM}^\t=\bsM^2-\tr{\bsM}\bsM+\tr{\widehat{\bsM}}\I_3$.
By taking the traces on both sides, we get that
$\tr{\widehat{\bsM}}=\frac12\Pa{\tr{\bsM}^2-\tr{\bsM^2}}$, which
implies
\begin{eqnarray}\label{M*}
\bsM^*=\bsM^2-\tr{\bsM}\bsM+
\frac12\Pa{\tr{\bsM}^2-\tr{\bsM^2}}\I_3.
\end{eqnarray}
For $\bsM+\I_3$, it holds that
\begin{eqnarray}\label{M+I}
(\bsM + \I_3)^*=(\bsM + \I_3)^2-\tr{(\bsM + \I_3)}(\bsM + \I_3)+
\frac12\Pa{\tr{(\bsM + \I_3)}^2-\tr{(\bsM + \mathbb{1}_3)^2}}\I_3 .
\end{eqnarray}
Subtracting Eq.~\eqref{M*} and $\I_3^*=\I_3$ from Eq.~\eqref{M+I},
we obtain $(\bsM + \I_3)^* - \bsM^* - \I_3^* = \Tr{\bsM} \I_3 -
\bsM$.

Recall from in \cite[Corollaries B5 and B6]{Zhang2025} that for any
$\bsM,\bsN\in\bbR^{3\times 3}$ and $\bsu,\bsv\in\bbR^3$,
$$
\bsM\bsu \times \bsN\bsv + \bsN\bsu \times \bsM\bsv =
\Pa{\widehat{\bsM + \bsN} - \widehat{\bsM} - \widehat{\bsN}} (\bsu
\times \bsv).
$$
Set $\bsN=\I_3$, then
\begin{eqnarray*}
\bsM^\t \bsu \times \bsv + \bsu \times \bsM^\t \bsv
&=& \Pa{\widehat{\bsM + \I_3}^\t - \widehat{\bsM}^\t - \widehat{\I_3}^\t} (\bsu \times \bsv)\\
&=& \Br{(\bsM + \I_3)^* - \bsM^* - \I_3^*} (\bsu \times \bsv)\\
&=& (\Tr{\bsM} \I_3 - \bsM) (\bsu \times \bsv) \\
&=& \Tr{\bsM} \bsu \times \bsv - \bsM (\bsu \times \bsv).
\end{eqnarray*}
We are done.
\end{proof}

In the diagrammatic enumeration process, a large number of candidate
invariants are immediately excluded due to their overly simple
structure or reducibility, leaving 29 invariants of relatively
complicated form. Further analysis indicates that among these 29
invariants, 8 of them can be expressed by the other 21.
Consequently, these 21 invariants suffice to generate the entire
invariant ring. All 29 invariants are listed in
Table~\ref{candidate}.

\begin{table}[H]
    \centering
    \caption{All the candidate generators.}
     \small
     \begin{minipage}{0.985\textwidth}
        \noindent In the table, $\bsA$ and $\bsB$ denote arbitrary $A$-type and $B$-type vectors, respectively, i.e. vectors carrying a single uncontracted $\SO_A(3)$ or $\SO_B(3)$ index. Graphical examples are shown in Fig.~\ref{AB}.
     \end{minipage}
    \normalsize
    \begin{tabular}{cc|cc}
        \hline
        notation & invariants & notation & invariants \\
        \hline
        $K_1$ & $t$ &  &  \\
        \hline
        $K_2$ & $\Inner{\bsC}{\bsC}$ & $U_1$ & $\Inner{(\bsa\cdot\cF^A)\bsC}{\bsC(\bsb\cdot\cF^B)}=2\innerm{\bsa}{\widehat \bsC}{\bsb}$ \\
        \hline
        $K_3$ & $\Inner{\bsa}{\bsa}$ & $U_2$ & $\innerm{\bsa}{\bsC\bsC^\t\bsC}{\bsb}$ \\
        \hline
        $K_4$ & $\Inner{\bsb}{\bsb}$ & $V_1$ & $\Inner{\bsC^\t\bsa\times\bsb}{\bsC^\t\bsC\bsb}$ \\
        \hline
        $K_5$ & $6\det(\bsC)$ & $V_2$ & $\Inner{\bsC^\t\bsa\times \bsb}{\bsC^\t\bsC\bsC^\t\bsa}$ \\
        \hline
        $K_6$ & $\innerm{\bsa}{\bsC}{\bsb}$ & $V_3$ & $\Inner{\bsC\bsC^\t\bsa\times \bsa}{\bsC\bsC^\t\bsC\bsb}$ \\
        \hline
        $K_7$ & $\Inner{\bsC\bsC^\t}{\bsC\bsC^\t}$ & $V_4$ & $\Inner{\bsC\bsC^\t\bsa\times \bsa}{\bsC\bsC^\t\bsC\bsC^\t\bsa}$ \\
        \hline
        $K_8$ & $\innerm{\bsa}{\bsC\bsC^\t}{\bsa}$ & $W_1$ & $\inner{\bsC\bsb\times \bsa}{\bsC\bsC^\t\bsa}$ \\
        \hline
        $K_9$ & $\innerm{\bsb}{\bsC^\t\bsC}{\bsb}$ & $W_2$ & $\Inner{\bsC\bsb\times \bsa}{\bsC\bsC^\t\bsC\bsb}$ \\
        \hline
        $X_1$ & $\innerm{\bsa}{\bsC\bsC^\t\bsC\bsC^\t}{\bsa}$ & $W_3$ & $\Inner{\bsC^\t\bsC\bsb\times \bsb}{\bsC^\t\bsC\bsC^\t\bsa}$ \\
        \hline
        $X_2$ & $\innerm{\bsb}{\bsC^\t\bsC\bsC^\t\bsC}{\bsb}$ & $W_4$ & $\Inner{\bsC^\t\bsC\bsb\times \bsb}{\bsC^\t\bsC\bsC^\t\bsC\bsb}$ \\
        \hline
        $P_1$ & $\Inner{(\bsa\cdot\cF^A)\bsC}{\bsC(\bsb\cdot\cF^B)\bsC^\t\bsC}$ & $Q_1$ & $\innerm{\bsA}{\bsC(\bsb\cdot\cF^B)\bsC^\t(\bsa\cdot\cF^A)\bsC}{\bsB}$ \\
        \hline
        $P_2$ & $\Inner{(\bsa\cdot\cF^A)\bsC\bsC^\t\bsC}{\bsC\bsC^\t\bsC(\bsb\cdot\cF^B)}$ & $Q_2$ & $\innerm{\bsA}{\bsC(\bsb\cdot\cF^B)\bsC^\t\bsC\bsC^\t(\bsa\cdot\cF^A)\bsC}{\bsB}$ \\
        \hline
        $Y_1$ & $\Inner{\bsa}{(\bsC\bsC^\t)^2\bsa\times\bsC\bsb} $ & $Z_1$ &$\Inner{\bsb}{(\bsC^\t\bsC)^2\bsb\times\bsC^\t\bsa} $ \\
        \hline
        $Y_2$ & $\Inner{\bsa}{(\bsC\bsC^\t)^2\bsa\times\bsC\bsC^\t\bsC\bsb}$ & $Z_2$ &  $\Inner{\bsb}{(\bsC^\t\bsC)^2\bsb\times\bsC^\t\bsC\bsC^\t\bsa}$\\
        \hline
    \end{tabular}
    \label{candidate}
\end{table}

\begin{thrm}[\cite{King2007}]\label{fundamental}
Let $V$ be the vector space spanned by the $16$ elements of the
two-qubit density matrix $\rho_{AB}$. Then, the ring
$\bbC[V]^{\sfG}$ of polynomials in the elements of $V$ that are
invariant under the local action of $\sfG =\GL_A(2)\times \GL_B(2)$
is generated by the \textbf{21} invariants:
\begin{eqnarray*}
\Set{K_1,\ldots,K_9}\cup\set{X_1,X_2}\cup\Set{U_1,U_2}\cup\Set{V_1,\ldots,V_4}\cup\Set{W_1,\ldots,W_4}.
\end{eqnarray*}
\end{thrm}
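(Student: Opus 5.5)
The plan has two halves. First, show that every invariant is a polynomial in the 21 listed elements. Second, argue that the count is right via the Molien series of Theorem~\ref{Mq}.

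For the first half, I will use the reduction of Subsection~\ref{graphical_rules}. By the first fundamental theorem for $\SO(3)$ (invariant tensors $\delta$ and $\varepsilon$), every element of $\bbC[V]^{\sfG}$ is a linear combination of products of complete contractions of $t,\bsa,\bsb,\bsC$ with $\delta^{A},\delta^{B},\varepsilon^{A},\varepsilon^{B}$. Products of connected graphs are products of lower-degree invariants, so it suffices to show that every \emph{connected} graph lies in the subalgebra $R$ generated by the 21 invariants.

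Next I use that $\varepsilon\varepsilon=\det(\delta\text{'s})$. Any connected graph containing two $\varepsilon^A$ (or two $\varepsilon^B$) reduces to a sum of graphs with at most one of each. So I only need to treat graphs with $\varepsilon$-content $\emptyset$, $\varepsilon^A$, $\varepsilon^B$, or $\varepsilon^A\varepsilon^B$, which are exactly the cases of Table~\ref{cases} that survive the connectivity exclusions.

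\textbf{Graphs without $\varepsilon$.} A connected graph is either a closed loop $\Tr{(\bsC\bsC^\t)^k}$ or a chain $\innerm{\bsx}{\text{word in }\bsC,\bsC^\t}{\bsy}$ with $\bsx,\bsy\in\{\bsa,\bsb\}$. For chains I apply Proposition~\ref{4c}, which turns any factor $\bsC\bsC^\t\bsC\bsC^\t\bsC$ into shorter words plus $\det(\bsC)\widehat{\bsC}$. For loops I apply the Cayley--Hamilton corollary. Together these bound the chain length by four $\bsC$'s. The terms involving $\widehat\bsC$ reduce via $U_1=2\innerm{\bsa}{\widehat\bsC}{\bsb}$ and $\widehat\bsC\bsC^\t=\det(\bsC)\I_3$. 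What remains is expressible through $K_2,\dots,K_9,X_1,X_2,U_2$ (note $\det\bsC=K_5/6$).

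\textbf{Graphs with exactly one $\varepsilon$.} The graph is a triple product $\Inner{\bsA_1\times\bsA_2}{\bsA_3}$ of $A$-type vectors, or the $B$-analogue. Using Proposition~\ref{p43} I push the matrices $\bsC\bsC^\t$ off cross products, and I use $\bsM\bsu\times\bsM\bsv=\widehat\bsM(\bsu\times\bsv)$ to absorb common factors into $\det$. Each $\bsA_i$ is $(\bsC\bsC^\t)^{k}\bsa$ or $(\bsC\bsC^\t)^k\bsC\bsb$ with $k\le 2$, and the triple product is antisymmetric. I expect that a finite case analysis reduces everything to $V_1,\dots,V_4$ and $W_1,\dots,W_4$ modulo $R$. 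The extra candidates $Y_i,Z_i$ must then be shown to lie in $R$; I would do this by applying Proposition~\ref{p43} with $\bsM=\bsC\bsC^\t$ and rearranging.

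\textbf{Graphs with one $\varepsilon^A$ and one $\varepsilon^B$.} These are the $P_i,Q_i,U_1$ type graphs. Here I contract $\varepsilon^A\bsC\bsC\varepsilon^B$ to $2\widehat\bsC$, using the fact that the adjugate is a double-$\varepsilon$ contraction. This turns them into $\varepsilon$-free chains containing $\widehat\bsC$, and those reduce as in the first case. In this way $P_1,P_2,Q_1,Q_2$ land in $R$.

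For the second half, once the 29 candidates are reduced to the 21, I will confirm that nothing is missing by comparing dimensions against Theorem~\ref{Mq}. I take the primary invariants $K_1,\dots,K_9$ together with one more of degree $6$, matching the denominator $(1-q)(1-q^2)^3(1-q^3)^2(1-q^4)^3(1-q^6)$. The secondary invariants are then products of $X_i,U_i,V_i,W_i$ matching the numerator. Because the Hironaka decomposition \eqref{dsum} is free, generation follows once the monomials in the 21 invariants span each graded piece up to degree $15$. That bound is the top degree of the numerator, so I would verify the spanning up to degree $15$ by computing ranks via the refined series $M_\sfG(t,a,b,c)$, degree by degree.

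The main obstacle is the single-$\varepsilon$ reduction: proving that every triple product of long $A$- or $B$-type vectors, including the $Y_i,Z_i$, reduces modulo lower-degree products to $V_1,\dots,V_4,W_1,\dots,W_4$. My first attempt would induct on total chain length using Propositions~\ref{4c} and~\ref{p43}. If that stalls, I would fall back on an explicit multigraded linear-algebra check against the refined Molien series.
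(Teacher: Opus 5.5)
Your route to generation is the paper's route with cleaner bookkeeping. You sort connected graphs by $\varepsilon$-content rather than by the 31 cases, and you use the same tools: the $\varepsilon\varepsilon$ identity, Proposition~\ref{4c} and its corollary, Proposition~\ref{p43}, and $\bsM\bsu\times\bsM\bsv=\widehat\bsM(\bsu\times\bsv)$. Two steps of this half do not go through as written.

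\emph{The mixed step.} Contracting $\varepsilon^A\bsC\bsC\varepsilon^B$ to $2\widehat\bsC$ needs two single $\bsC$-nodes, each joined to both $\varepsilon$'s. That holds for $U_1$. It fails for $Q_1,Q_2$, where only one $\bsC$ joins the two $\varepsilon$'s, and for $P_1,P_2$, where one connecting chain is $\bsC\bsC^\t\bsC$. The tool you need is the one-sided identity $\varepsilon^B_{ijk}c_{\beta j}c_{\gamma k}=\varepsilon^A_{\alpha\beta\gamma}(\widehat\bsC)_{\alpha i}$, which is the paper's $\bsC(\bsb\cdot\cF^B)\bsC^\t=(\widehat\bsC\bsb)\cdot\cF^A$. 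It always applies: at most one leg of $\varepsilon^B$ can end at $\bsb$, and a chain returning to the same $\varepsilon$ gives zero. After applying it, the two $\varepsilon^A$ reduce to $\delta$'s. Then $\widehat\bsC\bsC^\t=\bsC^\t\widehat\bsC=\det(\bsC)\I_3$ and $\widehat\bsC\widehat\bsC^\t=\widehat{\bsC\bsC^\t}$ leave only your $\varepsilon$-free case plus $\innerm{\bsa}{\widehat\bsC}{\bsb}=U_1/2$. This handles every mixed graph at once, which is more economical than the paper's case-by-case treatment of $P_i,Q_i$.

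\emph{The single-$\varepsilon$ step.} This is the real content of the theorem, and you only say you ``expect'' it works. It is a finite check that you must actually carry out; the paper does so through cases (24)--(27) and the explicit $Y_i,Z_i$ reductions. Set $\bsM=\bsC\bsC^\t$. The $A$-type vectors reduce to $\bsa,\bsM\bsa,\bsM^2\bsa,\bsC\bsb,\bsM\bsC\bsb$. A leftover $\det(\bsC)\widehat\bsC\bsb$ moves to the $B$ side via $\Inner{\bsu\times\bsv}{\widehat\bsC\bsb}=\Inner{\bsC^\t\bsu\times\bsC^\t\bsv}{\bsb}$. Of the ten resulting triples, four are $\pm W_1,\pm W_2,\pm V_3,\pm V_4$ and two are $\pm Y_1,\pm Y_2$. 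Each of the remaining four has two entries in the image of $\bsM$. Using $\widehat\bsM\bsC=\det(\bsC)\widehat\bsC$ and $\widehat\bsM\bsM=\det(\bsC)^2\I_3$, they become $\det(\bsC)$ times $V_1,V_2,W_3$, or $\det(\bsC)^2W_1$.

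\emph{The Molien half.} The paper does not use this at all, and as stated it has a genuine gap, so it cannot serve as your fallback. The claim that spanning through degree $15$ implies generation holds only if the ten chosen primaries lie in the subalgebra $R$ and form a homogeneous system of parameters. In that case Cohen--Macaulayness (Hochster--Roberts) makes $\bbC[V]^{\sfG}$ free over them. The secondary degrees are then the exponents of $M_{\sfG}(q)\prod_i(1-q^{\deg K_i})$, the largest being $15$. You do not name the degree-$6$ element. You also do not prove that the ten elements have the nullcone as their common zero set, for example via the Hilbert--Mumford criterion or a radical-membership computation. The shape of the denominator in Theorem~\ref{Mq} does not supply this, since one rational function can be written over many denominators, and algebraic independence alone is not enough. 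Without an hsop in $R$, a rank check up to degree $15$ says nothing about higher degrees.
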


\begin{proof}
(i) For cases (1) and (2), it is evident that only invariants $\inner{\bsa}{\bsa}$ and $\inner{\bsb}{\bsb}$ can be constructed, see Figure \ref{all}.

\begin{figure}[htbp]  
            \centering
            \includegraphics[trim=5 9 7 10, clip, width=0.78\textwidth]{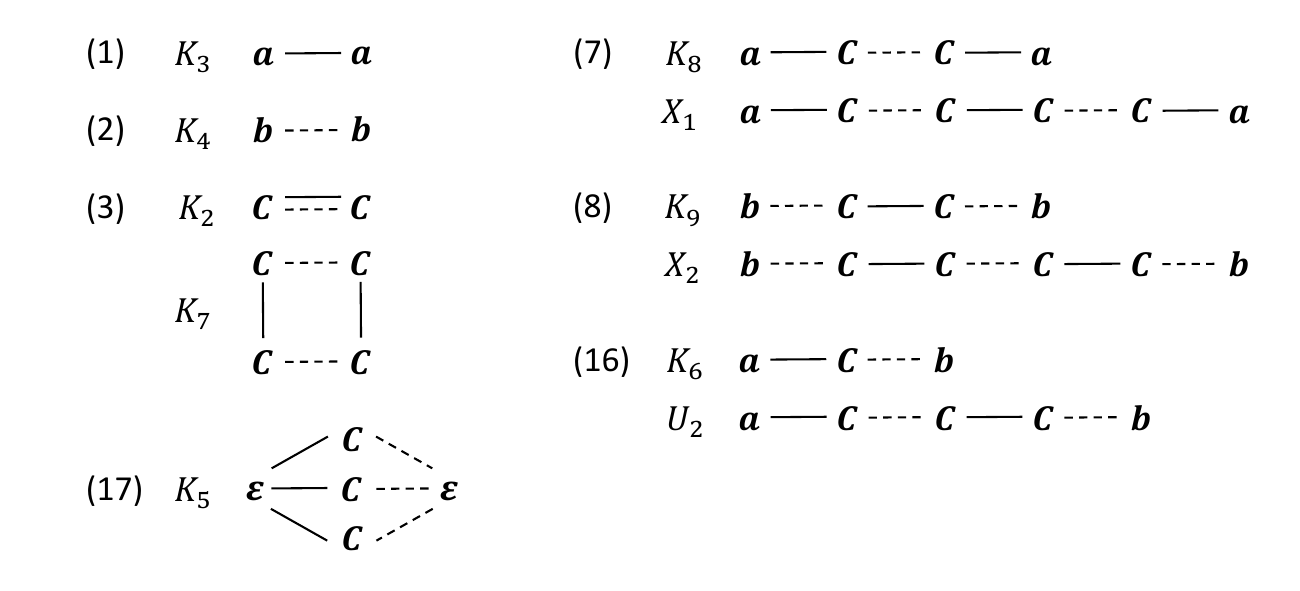}
            \caption{Some candidate invariants.}
            \label{all}
\end{figure}

(ii) All graphs containing two $\varepsilon$-nodes of the same type
may be eliminated by means of the identities
\begin{eqnarray*}
\varepsilon^A_{\alpha\beta\gamma}   \varepsilon^A_{\alpha'\beta'\gamma'} &=& \delta_{\alpha\alpha'}\delta_{\beta\beta'}\delta_{\gamma\gamma'}+\delta_{\alpha\beta'}\delta_{\beta\gamma'}\delta_{\gamma\alpha'} + \delta_{\alpha\gamma'}\delta_{\beta\alpha'}\delta_{\gamma\beta'} -\delta_{\alpha\alpha'}\delta_{\beta\gamma'}\delta_{\gamma\beta'} - \delta_{\alpha\beta'}\delta_{\beta\alpha'}\delta_{\gamma\gamma'} - \delta_{\alpha\gamma'}\delta_{\beta\beta'}\delta_{\gamma\alpha'},\\
            \varepsilon^B_{ijk}\varepsilon^B_{i'j'k'}&=&
            \delta_{ii'}\delta_{jj'}\delta_{kk'}+\delta_{ij'}\delta_{jk'}\delta_{ki'} + \delta_{ik'}\delta_{ji'}\delta_{kj'}-\delta_{ii'}\delta_{jk'}\delta_{kj'} -
            \delta_{ij'}\delta_{ji'}\delta_{kk'} -
            \delta_{ik'}\delta_{jj'}\delta_{ki'}.
\end{eqnarray*}
This means that in any graph capable of constructing candidate
invariants, neither $\varepsilon^A$  nor $\varepsilon^B$ can appear
more than once. From this, it is not possible to construct candidate
invariants for cases (4), (5).

And for cases (13), (14), to construct a connected graph that
achieves complete contraction of the indices, at least two
$\varepsilon$-nodes of the same type appear. For instance, the
connected graph composed of the fewest nodes is
$\varepsilon^B_{ijk}\varepsilon_{i'j'k}c^\t_{i\alpha}c_{\alpha
j}c^\t_{i'\beta}c_{\beta j'}$, see Figure \ref{14}. From the above
equation, we can obtain
\begin{eqnarray*}
\varepsilon^B_{ijk}\varepsilon^B_{i'j'k}c^\t_{i\alpha}c_{\alpha
j}c^\t_{i'\beta}c_{\beta
j'}&=&\sum_{i,j,k}\sum_{i',j',k'}\sum_{\alpha,
            \beta,\gamma}
            \big(\delta_{ii'}\delta_{jj'}+\delta_{ij'}\delta_{jk}\delta_{ki'} + \delta_{ik}\delta_{ji'}\delta_{kj'}\\
            && -\delta_{ii'}\delta_{jk}\delta_{kj'} -
            \delta_{ij'}\delta_{ji'} -
            \delta_{ik}\delta_{jj'}\delta_{ki'}\big)c^\t_{i\alpha}c_{\alpha j}c^\t_{i'\beta}c_{\beta j'}\\
            &=&\sum_{i,j,\alpha,\beta} c^\t_{i\alpha}c_{\alpha j}c^\t_{i\beta}c_{\beta j}+\sum_{i,j,\alpha,\beta} c^\t_{i\alpha}c_{\alpha j}c^\t_{j\beta}c_{\beta i}+\sum_{i,j,\alpha,\beta} c^\t_{i\alpha}c_{\alpha j}c^\t_{j\beta}c_{\beta i}\\
            &&-\sum_{i,j,\alpha,\beta} c^\t_{i\alpha}c_{\alpha j}c^\t_{i\beta}c_{\beta j}-\sum_{i,j,\alpha,\beta} c^\t_{i\alpha}c_{\alpha j}c^\t_{j\beta}c_{\beta i}-\sum_{i,j,\alpha,\beta} c^\t_{i\alpha}c_{\alpha j}c^\t_{i\beta}c_{\beta j}\\
            &=&3\tr{\bsC\bsC^\t\bsC\bsC^\t}-3\tr{\bsC\bsC^\t\bsC\bsC^\t}=0.
\end{eqnarray*}
Therefore, situations (13) and (14) can also be excluded.

\begin{figure}[htbp]  
\centering
\includegraphics[trim=5 14 7 10, clip, width=0.26\textwidth]{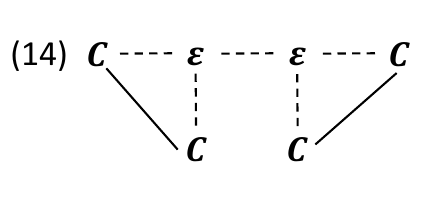}
\caption{In the case (14), the graph corresponding to
$\varepsilon^B_{ijk}\varepsilon^B_{i'j'k}c^\t_{i\alpha}c_{\alpha
j}c^\t_{i'\beta}c_{\beta j'}$.} \label{14}
\end{figure}

(iii) The antisymmetry of each $\varepsilon$-tensor implies that $\sum_{\alpha,\beta}\varepsilon^A_{\alpha\beta\gamma}a_\alpha a_\beta=0$ and $\sum_{i,j}\varepsilon^B_{ijk}b_i b_j=0$, for any vectors $\bsa$ and $\bsb$ having just a single uncontracted $A$- and $B$-type index, respectively.

Indeed, for any vector $\bsu\in\bbR^3$,
\begin{eqnarray*}
\sum^3_{\alpha,\beta=1}\varepsilon^A_{\alpha\beta\gamma}u_\alpha
u_\beta=\sum^3_{\alpha,\beta=1}\varepsilon^A_{\beta\alpha\gamma}u_\beta
u_\alpha
=\sum^3_{\alpha,\beta=1}(-\varepsilon^A_{\alpha\beta\gamma})
u_\alpha u_\beta =
-\sum^3_{\alpha,\beta=1}\varepsilon^A_{\alpha\beta\gamma}u_\alpha
u_\beta,
\end{eqnarray*}
where in the second equality we have relabeled the dummy indices.
Hence the sum must vanish. Equivalently, this is simply the
statement that $\bsu\times\bsu=0$ in terms of the cross product
defined by $[\bsu\times\bsv]_i=\sum_{j,k}\varepsilon_{ijk}u_jv_k$.
This is consistent with the graphical interpretation of the
$\varepsilon$-node as the cross product, which will be used
extensively in the reductions below.

This observation ensures that no two edges of any single
$\varepsilon$-node may be attached to the same vector. Consequently,
it is not possible to construct candidate invariants for cases (9)
and (12), see Figure~\ref{9,12}.

\begin{figure}[H]  
\centering\includegraphics[trim=5 8 7 10, clip,
width=1\textwidth]{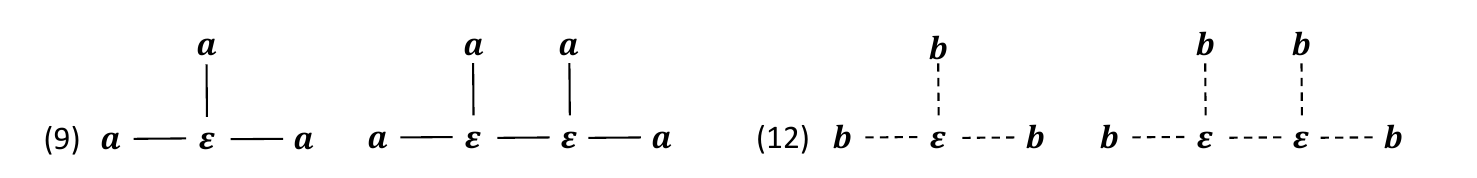} \caption{In both cases (9) and (12),
there must be a three-order tensor connecting two identical
vectors.} \label{9,12}
\end{figure}

(iv) We now examine the case where an $\varepsilon$-node has all
three of its edges connected to $\bsC$-nodes. The determinant of
$\bsC$ equals the triple product of its column vectors, hence it can
be expressed using the Levi-Civita symbol as
$$
\det(\bsC) = \sum_{\alpha, \beta, \gamma} \varepsilon_{\alpha \beta \gamma} c_{\alpha 1} c_{\beta 2} c_{\gamma 3}
= \varepsilon_{ijk} \sum_{\alpha, \beta, \gamma} \varepsilon_{\alpha \beta \gamma} c_{\alpha i} c_{\beta j} c_{\gamma k},
$$
where $ i, j, k$ are all distinct. Rearranging gives $\sum_{\alpha,
\beta, \gamma} \varepsilon_{\alpha \beta \gamma} c_{\alpha i}
c_{\beta j} c_{\gamma k} = \varepsilon_{ijk} \det(\bsC)$ for all
$i,j,k$.

For a general connected graph where all three edges of a single
$\varepsilon$-node are linked to $\bsC$-nodes, if the remaining
edges of these $\bsC$-nodes are contracted among themselves or with
other nodes, the above determinant expansion factors out a
$\det(\bsC)$ factor, leaving a lower-degree invariant. Consequently,
such graphs are reducible and do not yield new candidate invariants.
The only irreducible exception is the graph corresponding to
$$
\sum_{\alpha, \beta, \gamma, i, j, k = 1}^{3} \varepsilon^A_{\alpha \beta \gamma} c_{\alpha i} c_{\beta j} c_{\gamma k} \varepsilon^B_{ijk}=6 \det(\bsC),
$$
where the factor of $6$ arises from the sum over all permutations of
$i,j,k$. This graph is irreducible and is retained as the invariant
$K_5 = 6\det(\bsC)$.

For case (17), the graph contains only $\bsC$,
$\varepsilon^A_{\alpha\beta\gamma}$, and $\varepsilon^B_{ijk}$.
Since the same type of third-order tensor can only appear once, the
only connected graph that can constitute a candidate invariant is
$6\det(\bsC)$, see Figure~\ref{all}. All other connected graphs can
be decomposed.

In case (22), the graph contains $\bsb$, $\bsC$, and
$\varepsilon^A_{\alpha\beta\gamma}$. Since
$\varepsilon^A_{\alpha\beta\gamma}$ can appear only once, its three
edges must all be connected to $\bsC$-nodes. However, the remaining
edges of these $\bsC$-nodes can only connect to $\bsb$ or among
themselves. A direct inspection of all possible connection patterns
shows that no connected graph can be formed that is not already
reducible to products of lower-degree invariants. Case (23) is the
dual, with $\varepsilon^B_{ijk}$ replacing
$\varepsilon^A_{\alpha\beta\gamma}$ and $\bsa$ replacing $\bsb$ .

Similarly, in case (29), which contains $\bsa$, $\bsC$,
$\varepsilon^A_{\alpha\beta\gamma}$, and $\varepsilon^B_{ijk}$,
where both $\varepsilon^A_{\alpha\beta\gamma}$, and
$\varepsilon^B_{ijk}$ can appear only once. Any graph would require
the three edges of $\varepsilon^B$-node to be attached to
$\bsC$-nodes, leading to graphs that either vanish or factor through
lower-degree invariants. Thus no new candidate invariants arise from
this case. Case (30) is the dual.

(v) From the proposition \ref{4c}, the graph of any candidate invariant may contain a
string of no more than 4 $\bsC$ nodes.

Therefore, in situation (3), only two possible candidate invariants,
$\inner{\bsC}{\bsC}$ and $\inner{\bsC\bsC^\t}{\bsC\bsC^\t}$, can be
constructed. Similarly, in case (7), only candidate invariants
$\innerm{\bsa}{\bsC\bsC^\t\bsC\bsC^\t}{\bsa}$ and
$\innerm{\bsa}{\bsC\bsC^\t}{\bsa}$ can be constructed, while in case
(8), only candidate invariants
$\innerm{\bsb}{\bsC^\t\bsC\bsC^\t\bsC}{\bsb}$ and
$\innerm{\bsb}{\bsC^\t\bsC}{\bsb}$ can be constructed. Moreover,
situation (16) can only generate candidate invariants
$\innerm{\bsa}{\bsC}{\bsa}$ and
$\innerm{\bsa}{\bsC\bsC^\t\bsC}{\bsa}$. The above invariants are
shown in Figure \ref{all}.

From the 26 cases discussed above, we obtain 11 candidate invariants
in total. Among the remaining five cases, namely (24), (25), (26),
(27) and (31), cases (24) and (25) are dual to each other, and cases
(26) and (27) are also dual to each other. Hence it suffices to
consider only (24), (26) and (31).

(vi) The antisymmetry of the $\varepsilon$-tensors also guarantees
that no two edges of any $\varepsilon$-node may be linked to a loop
formed solely by $\bsC$-nodes. Indeed, any closed chain constructed
from an even number of $\bsC$ and $\bsC^\t$ factors, such as
$(\bsC\bsC^\t)^n$ or $(\bsC^\t \bsC)^n$, is symmetric. Consequently,
its contraction with an $\varepsilon$-tensor vanishes identically.

To see this explicitly, consider $(\bsC\bsC^\t)^n$, for which
$[(\bsC\bsC^\t)^n]_{\alpha\beta}=[(\bsC\bsC^\t)^n]_{\beta\alpha}$.
Thus
\begin{eqnarray*}
\sum_{\alpha,\beta}\varepsilon^A_{\alpha\beta\gamma}[(\bsC\bsC^\t)^n]_{\alpha\beta}
&=&\sum^3_{\alpha,\beta=1}\varepsilon^A_{\beta\alpha\gamma}[(\bsC\bsC^\t)^n]_{\beta\alpha}\\
&=&\sum^3_{\alpha,\beta=1}[-\varepsilon^A_{\alpha\beta\gamma}][(\bsC\bsC^\t)^n]_{\alpha\beta}\\
&=&-\sum^3_{\alpha,\beta=1}\varepsilon^A_{\alpha\beta\gamma}[(\bsC\bsC^\t)^n]_{\alpha\beta}
\end{eqnarray*}
where we have relabeled the dummy indices in the second step. Hence
the sum must vanish. Analogously, we get that
$\sum_{i,j}\varepsilon^B_{ijk}[(\bsC^\t\bsC)^n]_{ij}=0$.

(vii) We know consider case (24), where connected graphs are
constructed using only $\bsb, \bsC$, and $\varepsilon^B_{ijk}$.
Since at most one $\varepsilon$-node can appear in a connected
graph, the enumeration starts from this node, as shown in
Figure~\ref{24}. From the figure, only the connected graphs marked
in red yield a candidate invariant, namely
$W_4=\inner{\bsC^\t\bsC\bsb\times\bsb}{\bsC^\t\bsC\bsC^\t\bsC\bsb}$.
This is because the invariants corresponding to the other graphs can
either be generated by other invariants or are identically zero.
Dually, in case (25), one obtains the candidate invariant
$V_4=\inner{\bsC\bsC^\t\bsa\times\bsa}{\bsC\bsC^\t\bsC\bsC^\t\bsa}$.

(viii) We next address case (26), which constructs the connected
graph using only $\bsa, \bsb, \bsC$, and
$\varepsilon^A_{\alpha\beta\gamma}$. The connected graphs are
constructed starting from $\varepsilon^A_{\alpha\beta\gamma}$, see
Figure \ref{26} in Appendix \ref{graph}. By enumerating all the
connected graphs, it can be known that there are 5 candidate
invariants for case (26), namely $W_1$, $W_2$, $V_3$, $Y_1$ and
$Y_2$. Dually, in case (27), the candidate invariants is $V_1$,
$V_2$, $W_3$, $Z_1$ and $Z_2$.

From Propositions~\ref{4c} and \ref{p43}, it holds that
$\widehat{\bsC\bsC^\t} =
(\bsC\bsC^\t)^2-\Tr{\bsC\bsC^\t}\bsC\bsC^\t+\Tr{\widehat{\bsC\bsC^\t}}\I_3$
and $\bsC\bsu\times\bsC\bsv=\widehat{\bsC}(\bsu\times\bsv)$ for any
$\bsu,\bsv,\bsC$. Then we obtain
\begin{eqnarray*}
Y_1&=& \Inner{\bsa}{(\bsC\bsC^\t)^2\bsa\times\bsC\bsb} \\
&=& \Inner{\bsa}{\widehat{\bsC\bsC^\t}\bsa\times\bsC\bsb} +
\Tr{\bsC\bsC^\t}\Inner{\bsa}{\bsC\bsC^\t\bsa\times\bsC\bsb}-\Tr{\widehat{\bsC\bsC^\t}}\Inner{\bsa}{\bsa\times\bsC\bsb}\\
&=& \Inner{\widehat{\bsC\bsC^\t}\bsa}{\bsC\bsb\times\bsa} +
\Tr{\bsC\bsC^\t}\Inner{\bsa}{\bsC\bsC^\t\bsa\times\bsC\bsb}\\
&=& \Inner{\bsa}{\bsC\bsC^\t\bsC\bsb\times\bsC\bsC^\t\bsa} + \Tr{\bsC\bsC^\t}\Inner{\bsa}{\bsC\bsC^\t\bsa\times\bsC\bsb}\\
&=&\Inner{\bsC\bsC^\t\bsa\times \bsa}{\bsC\bsC^\t\bsC\bsb} +    \Tr{\bsC\bsC^\t}\Inner{\bsC\bsb\times\bsa}{\bsC\bsC^\t\bsa}\\
&=&V_3+K_2W_1
\end{eqnarray*}
and
\begin{eqnarray*}
Y_2&=&\Inner{\bsa}{(\bsC\bsC^\t)^2\bsa\times\bsC\bsC^\t\bsC\bsb}\\
&=& \Inner{\bsa}{\widehat{\bsC\bsC^\t}\bsa\times\bsC\bsC^\t\bsC\bsb}
+ \Tr{\bsC\bsC^\t}\Inner{\bsa}{\bsC\bsC^\t\bsa\times
\bsC\bsC^\t\bsC\bsb}-\Tr{\widehat{\bsC\bsC^\t}}\Inner{\bsa}{\bsa\times\bsC\bsC^\t\bsC\bsb}\\
&=&\Inner{\bsC\bsC^\t\bsC\bsb\times\bsa}{\widehat{\bsC\bsC^\t}\bsa} +  \Tr{\bsC\bsC^\t}\Inner{\bsa}{\bsC\bsC^\t\bsa\times \bsC\bsC^\t\bsC\bsb}\\
&=& \Inner{\bsC\bsC^\t\bsC\bsb\times\bsa}{\widehat{\bsC\bsC^\t}\bsa} -  \Tr{\bsC\bsC^\t}\Inner{\bsC\bsC^\t\bsa\times\bsa}{\bsC\bsC^\t\bsC\bsb}\\
&=& \Inner{\bsC\bsC^\t\bsC\bsC^\t\bsC\bsb\times\bsC\bsC^\t\bsa}{\bsa} -    \Tr{\bsC\bsC^\t}\Inner{\bsC\bsC^\t\bsa\times\bsa}{\bsC\bsC^\t\bsC\bsb}\\
&=&\Inner{\bsC\bsC^\t\bsa\times\bsa}{\bsC\bsC^\t\bsC\bsC^\t\bsC\bsb}
-\Tr{\bsC\bsC^\t}\Inner{\bsC\bsC^\t\bsa\times\bsa}{\bsC\bsC^\t\bsC\bsb}.
\end{eqnarray*}
Due to $\bsC\bsC^\t\bsC\bsC^\t\bsC=
K_2\bsC\bsC^\t\bsC+\det(\bsC)\widehat\bsC - \frac12(K_2^2-K_7)\bsC$,
then
\begin{eqnarray*}
Y_2 &=&K_2\Inner{\bsC\bsC^\t\bsa\times\bsa}{\bsC\bsC^\t\bsC\bsb} +
\det(\bsC)\Inner{\bsC\bsC^\t\bsa\times\bsa}{\widehat\bsC\bsb}-\tfrac12(K_2^2-K_7)\Inner{\bsC\bsC^\t\bsa\times\bsa}{\bsC\bsb}\\
&&-K_2\Inner{\bsC\bsC^\t\bsa\times\bsa}{\bsC\bsC^\t\bsC\bsb}\\
&=&\det(\bsC)\Inner{\bsC\bsC^\t\bsa\times\bsa}{\widehat\bsC\bsb}-\tfrac12(K_2^2-K_7)\Inner{\bsC\bsC^\t\bsa\times\bsa}{\bsC\bsb}\\
&=&\det(\bsC)\Inner{\bsC^\t\bsC\bsC^\t\bsa\times\bsC^\t\bsa}{\bsb}-\tfrac12(K_2^2-K_7)\Inner{\bsC\bsC^\t\bsa\times\bsa}{\bsC\bsb}\\
&=&\det(\bsC)\Inner{\bsC^\t\bsa\times\bsb}{\bsC^\t\bsC\bsC^\t\bsa}+\tfrac12(K_2^2-K_7)\Inner{\bsC\bsb\times\bsa}{\bsC\bsC^\t\bsa}\\
&=&\frac16K_5V_2+\frac12(K_2^2-K_7)W_1.
\end{eqnarray*}

Analogously, in case (27), it holds that
\begin{eqnarray*}
Z_1&=&\Inner{\bsb}{(\bsC^\t\bsC)^2\bsb\times\bsC^\t\bsa}\\
&=&\Inner{\bsC^\t\bsC\bsb\times \bsb}{\bsC^\t\bsC\bsC^\t\bsa} + \Tr{\bsC\bsC^\t}\Inner{\bsC^\t\bsa\times\bsb}{\bsC^\t\bsC\bsb}\\
&=&W_3+K_2V_1
\end{eqnarray*} and
\begin{eqnarray*}
Z_2&=&\Inner{\bsb}{(\bsC^\t\bsC)^2\bsb\times\bsC^\t\bsC\bsC^\t\bsa}\\
&=&\det(\bsC)\Inner{\bsC\bsb\times\bsa}{\bsC\bsC^\t\bsC\bsb}+\tfrac12(K_2^2-K_7)\Inner{\bsC^\t\bsa\times\bsb}{\bsC^\t\bsC\bsb}\\
&=&\frac16K_5W_2+\frac12(K_2^2-K_7)V_1.
\end{eqnarray*}

Therefore, the candidate invariants $Y_1$, $Y_2$, $Z_1$, and $Z_2$
are reducible and hence need not be included in the generating set.

(ix) For the last case, all building blocks are needed to construct
a connected graph (see Figures~\ref{31a}--\ref{31c} in
Appendix~\ref{graph}). From these diagrams, it follows that in case
(31), all candidate invariants except $U_1$, $P_1$, and $P_2$ can be
reduced to two types of invariants, denoted $Q_1$ and $Q_2$ (see Figure~\ref{Q}), where $\bsA$ and $\bsB$ denote arbitrary $A$-type and $B$-type vectors, respectively. This reduction uses the identities from Propositions \ref{4c} and \ref{p43} to eliminate subgraphs involving cross products and higher-order $\bsC$-strings, leaving only the two structurally distinct invariants represented by $Q_1$ and $Q_2$.

\begin{figure}[H]  
\centering
\includegraphics[trim=5 14 5 10, clip, width=0.6\textwidth]{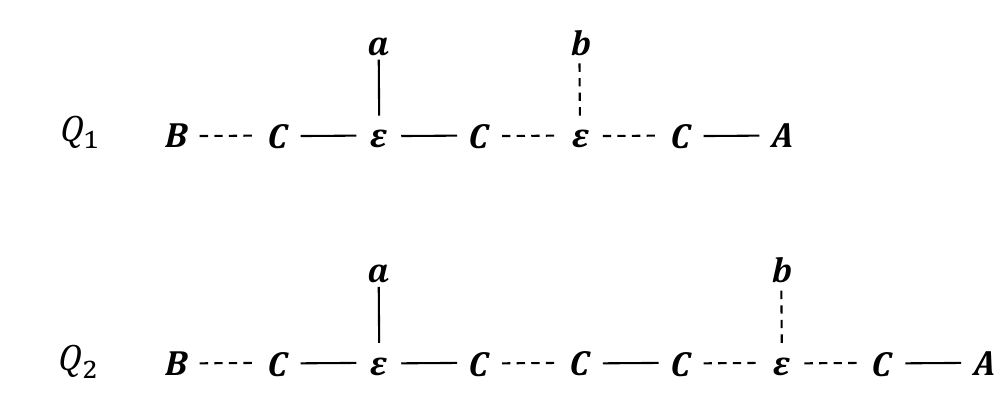}
\caption{The two types of invariants in situation (31).} \label{Q}
\end{figure}

From in \cite[Lemmas B1 and B2]{Zhang2025}, it holds that
$\bsM(\bsu\cdot\cF)\bsM^\t=(\widehat{\bsM}\bsu\cdot\cF)$ and
$(\bsu\cdot\cF)^\t(\bsv\cdot\cF)=\inner{\bsv}{\bsu}\I_3-\out{\bsv}{\bsu}$
for any $\bsu,\bsv,\bsM$. Using the two identities, we obtain
\begin{eqnarray*}
Q_1&=&\innerm{\bsA}{\bsC(\bsb\cdot\cF^B)\bsC^\t(\bsa\cdot\cF^A)\bsC}{\bsB}=\innerm{\bsA}{(\widehat\bsC\bsb\cdot\cF^B)(\bsa\cdot\cF^A)\bsC}{\bsB}\\
&=& \innerm{\bsA}{\Pa{\out{\bsa}{\widehat\bsC\bsb} - \Inner{\bsa}{\widehat\bsC\bsb}\I_3}\bsC}{\bsB}\\
&=&\det(\bsC)\Inner{\bsA}{\bsa}\Inner{\bsb}{\bsB} -
\innerm{\bsa}{\widehat \bsC}{\bsb}\innerm{\bsA}{\bsC}{\bsB}.
\end{eqnarray*}
and
\begin{eqnarray*}
Q_2&=&\innerm{\bsA}{\bsC(\bsb\cdot\cF)\bsC^\t\bsC\bsC^\t(\bsa\cdot\cF)\bsC}{\bsB}=-\innerm{\bsA}{(\widehat\bsC\bsb\cdot\cF)\bsC\bsC^\t}{\bsa\times\bsC
\bsB} \\
&=&\innerm{\widehat\bsC\bsb\times\bsA}{\bsC\bsC^\t}{\bsC
\bsB\times\bsa}.
\end{eqnarray*}
By the Proposition \ref{p43}, we have
\begin{eqnarray*}
(\bsC\bsC^\t)(\bsC \bsB\times\bsa)&=& \Inner{\bsC}{\bsC}(\bsC
\bsB\times\bsa) - (\bsC\bsC^\t)^\t\bsC \bsB\times\bsa - \bsC
\bsB\times(\bsC\bsC^\t)^\t\bsa\\
&=&\Inner{\bsC}{\bsC}(\bsC \bsB\times\bsa) - \bsC\bsC^\t\bsC
\bsB\times\bsa - \bsC \bsB\times\bsC\bsC^\t\bsa.
\end{eqnarray*}
Thus
\begin{eqnarray*}
Q_2&=&\Inner{\bsC}{\bsC}\Inner{\widehat\bsC\bsb\times\bsA}{\bsC\bsB\times\bsa}
-\Inner{\widehat\bsC\bsb\times\bsA}{\bsC\bsC^\t\bsC \bsB\times\bsa}-
\Inner{\widehat\bsC\bsb\times\bsA}{\bsC \bsB\times\bsC\bsC^\t\bsa}.
\end{eqnarray*}
Using the following identity:
$\Inner{\bsu\times\bsv}{\bsx\times\bsy}=\Inner{\bsu}{\bsx}\Inner{\bsv}{\bsy}
- \Inner{\bsu}{\bsy}\Inner{\bsv}{\bsx}$ for any
$\bsu,\bsv,\bsx,\bsy\in\bbR^3$, we obtain
\begin{eqnarray*}
\Inner{\widehat{\bsC}\bsb\times\bsA}{\bsC\bsB\times\bsa}
        &=&\det(\bsC)\Inner{\bsA}{\bsa}\Inner{\bsb}{\bsB}
        -\innerm{\bsa}{\widehat{\bsC}}{\bsb}\innerm{\bsA}{\bsC}{\bsB},\\
        \Inner{\widehat{\bsC}\bsb\times\bsA}{\bsC\bsC^\t\bsC\bsB\times\bsa}
        &=&\det(\bsC)\Inner{\bsA}{\bsa}\innerm{\bsb}{\bsC^\t\bsC}{\bsB}
        -\innerm{\bsa}{\widehat{\bsC}}{\bsb}\innerm{\bsA}{\bsC\bsC^\t\bsC}{\bsB},\\
        \Inner{\widehat{\bsC}\bsb\times\bsA}{\bsC\bsB\times\bsC\bsC^\t\bsa}
        &=&\det(\bsC)\Bigl[\innerm{\bsA}{\bsC\bsC^\t}{\bsa}\Inner{\bsb}{\bsB}
        -\innerm{\bsa}{\bsC}{\bsb}\innerm{\bsA}{\bsC}{\bsB}\Bigr].
\end{eqnarray*}
Therefore we get that
\begin{eqnarray*}
Q_2&=&\innerm{\bsa}{\widehat\bsC}{\bsb}\Br{\innerm{\bsA}{\bsC\bsC^\t\bsC}{\bsB}
-\innerm{\bsA}{\bsC}{\bsB}\Inner{\bsC}{\bsC}} \\
&&+\det(\bsC)\Big[\Inner{\bsA}{\bsa}\Inner{\bsb}{\bsB}\Inner{\bsC}{\bsC}
+\innerm{\bsa}{\bsC}{\bsb}\innerm{\bsA}{\bsC}{\bsB} \\
&&- \innerm{\bsA}{\bsC\bsC^\t}{\bsa}\Inner{\bsb}{\bsB} -
\Inner{\bsA}{\bsa}\innerm{\bsb}{\bsC^\t\bsC}{\bsB}\Big].
\end{eqnarray*}

Furthermore, $P_1$ and $P_2$ are two invariants that were not taken
into account in \cite{King2007} (i.e., they were missing from their
enumeration). We now show that, although they appear in our
enumeration, they are in fact reducible and hence need not be
included in the generating set. For $P_1$, we have
\begin{eqnarray*}
P_1&=&\inner{(\bsa\cdot\cF)\bsC}{\bsC(\bsb\cdot\cF)\bsC^\t\bsC}=\tr{\bsC^\t(\bsa\cdot\cF)^\t\bsC(\bsb\cdot\cF)\bsC^\t\bsC}\\
&=&-\tr{(\widehat{\bsC^\t}\bsa\cdot\cF)(\bsb\cdot\cF)\bsC^\t\bsC}=-\tr{(\out{\bsb}{\widehat{\bsC^\t}\bsa}-\inner{\widehat{\bsC^\t}\bsa}{\bsb}\I_3)\bsC^\t\bsC} \\
&=&-\det(\bsC)\innerm{\bsa}{\bsC}{\bsb}+\innerm{\bsa}{\widehat{\bsC}}{\bsb}\inner{\bsC}{\bsC}.
\end{eqnarray*}
Due to
$\widehat{\bsC^\t}\widehat{\bsC}=(\bsC^\t\bsC)^2-\inner{\bsC}{\bsC}\bsC^\t\bsC+\inner{\widehat{\bsC}}{\widehat{\bsC}}\I_3$
and
$\inner{\widehat{\bsC}}{\widehat{\bsC}}=\frac12\Pa{\inner{\bsC}{\bsC}^2-\inner{\bsC^\t\bsC}{\bsC^\t\bsC}}$,
then
\begin{eqnarray*}
P_2&=&\inner{(\bsa\cdot\cF)\bsC\bsC^\t\bsC}{\bsC\bsC^\t\bsC(\bsb\cdot\cF)}=\Tr{\bsC^\t\bsC\bsC^\t(\bsa\cdot\cF)^\t\bsC\bsC^\t\bsC(\bsb\cdot\cF)}\\
&=&-\Tr{(\widehat{\bsC^\t\bsC\bsC^\t}\bsa\cdot\cF)(\bsb\cdot\cF)}=-\Tr{(\out{\bsb}{\widehat{\bsC^\t\bsC\bsC^\t}\bsa}-\inner{\widehat{\bsC^\t\bsC\bsC^\t}\bsa}{\bsb}\I_3)}\\
&=&2\innerm{\bsa}{\widehat{\bsC}\widehat{\bsC^\t}\widehat{\bsC}}{\bsb}=2\Pa{\det(\bsC)\innerm{\bsa}{\bsC\bsC^\t\bsC}{\bsb}-\det(\bsC)\inner{\bsC}{\bsC}\innerm{\bsa}{\bsC}{\bsb}+\inner{\widehat{\bsC}}{\widehat{\bsC}}\innerm{\bsa}{\widehat{\bsC}}{\bsb}}\\
&=&2\det{(\bsC)}[\innerm{\bsa}{\bsC\bsC^\t\bsC}{\bsb}-\inner{\bsC}{\bsC}\innerm{\bsa}{\bsC}{\bsb}]+\innerm{\bsa}{\widehat{\bsC}}{\bsb}[\inner{\bsC}{\bsC}^2-\inner{\bsC^\t\bsC}{\bsC^\t\bsC}].
\end{eqnarray*}
As mentioned above, the candidate invariants $Q_1,Q_2, P_1$, and
$P_2$ need not be included in the generating set.

In summary, by enumerating all connected undirected graphs, we
obtain the candidate invariants, which are presented in Table
\ref{candidate}. Subsequently, through explicit computation, some of
these invariants can be ruled out, leading to the result stated in
the Theorem \ref{fundamental}.
\end{proof}

\section{Concluding remarks}\label{s5}

In this paper, we have systematically revisited the work of King
\emph{et al} on the ring of local invariants for mixed two-qubit
systems. Starting from the physical motivation, we recalled the
basic notion of local unitary invariants. By employing the Bloch
expansion and the vectorization framework, we transformed the action
of the local transformation group $\G = \GL(2) \times \GL(2)$ into a
linear representation on the space $V \cong \bbC^{16}$. On this
basis, we presented a detailed computation of the Molien series,
providing a complete derivation from the contour integral in
Molien's theorem to its closed-form expression. We then adopted the
graphical tensor method to systematically construct the 21
generators by enumerating all possible connected patterns among the
basic building blocks. Crucially, we provided comprehensive,
step-by-step details that are absent from King's original
exposition.

A natural next step is to generalize this approach to qudit-qudit
systems. However, unlike the well-developed theory in the two-qubit
case, the invariant rings for qubit-qutrit systems remain in a
preliminary stage. The two-qubit case benefits from the surjective
Lie group homomorphism $\SU(2)\to\SO(3)$, which effectively reduces
the $\SU(2)\times\SU(2)$ action on Bloch parameters to orthogonal
transformations in $\SO(3)\times\SO(3)$. This special property is
precisely what makes the graphical method of \cite{King2007} so
efficient. For qubit-qutrit systems, the local unitary group is
$\SU(2)\times\SU(3)$; however, the map $\SU(3)\to\SO(8)$ is not
surjective---and, more generally, no such surjective Lie group
homomorphism $\SU(d)\to\SO(d^2-1)$ exists for any integer $d>2$.
Consequently, the core strategy of the two-qubit graphical
approach---constructing invariants via tensor contractions---cannot
be directly ported to the qubit-qutrit setting.

Gerdt \emph{et al} sought to characterize the polynomial invariants
of qubit-qutrit mixed states using the Casimir invariants of
$\SU(6)$ and computed the corresponding Molien series
\cite{Gerdt2011}. They explicitly highlighted the substantial
computational difficulties inherent in constructing local unitary
polynomial invariants. While these works lay a valuable
group-theoretic foundation, their results remain incomplete and do
not yet yield a comprehensive description on par with the two-qubit
case.

In our previous work \cite{Zhang2025}, we established a formal
connection between Makhlin's fundamental invariants and \emph{local
unitary Bargmann invariants}\footnote{Note: All local unitary
Bargmann invariants are measurable quantities by quantum circuits.},
and successfully employed the latter to completely characterize
local unitary equivalence for two-qubit systems. Building on this,
we conjectured in \cite{Zhang2026} that the local unitary orbit of
an arbitrary multipartite state is completely determined by its
Bargmann invariants. However, Ma and Shi \cite{MaShi2025} recently
disproved this conjecture for the equal-dimension case $\bbC^d\ot
\bbC^d$ with $d\geqslant 3$, by constructing explicit
counterexamples. This demonstrates that the Bargmann-invariant
approach is not universally valid for equal dimensions $d\geqslant
3$. Whether the same conclusion extends to the qubit-qutrit case
remains an \emph{open} question, since their counterexamples rely
crucially on the equal-dimension assumption. Nevertheless, their
result already indicates the limitations of the Bargmann invariant
approach beyond the two-qubit setting, and reinforces the view that
two-qubit systems are highly special---methods that exploit their
unique algebraic structure are unlikely to generalize
straightforwardly.

In summary, a complete characterization of the ring of local
invariants for qubit-qutrit systems remains a challenging yet
worthwhile direction for future research. It represents not only a
natural extension of two-qubit theory, but also a potential stepping
stone toward understanding the invariant structure of general $d_1
\ot d_2$ systems.

%
%
%

\newpage
\appendix
\appendixpage
\addappheadtotoc

\section{Derivation of Eq.~\eqref{MJK}}\label{a}

Fix a secondary invariant $J_k$ and consider the set $\cR_k = J_k
\cdot \bbC[K_1, \dots, K_n]$. By the direct sum decomposition
\eqref{dsum}, the subspaces $\cR_k$ are linearly independent as
subspaces, hence
$$
n_m = \sum_{k=0}^{r} n_{m,k},
$$
where $n_{m,k}$ denotes the number of degree-$m$ homogeneous
invariants in $\cR_k$.

Let $P \in \bbC[K_1, \dots, K_n]$, then $\deg(J_k P) = \deg J_k +
\deg P$. If $m - \deg J_k \geqslant 0$, then $n_{m,k}$ equals the
number of degree $d = m - \deg J_k$ homogeneous polynomials in
$\bbC[K_1, \dots, K_n]$, denoted by $n_d$; otherwise $n_{m,k} = n_d
= 0$. Since $K_1, \dots, K_n$ are algebraically independent, a basis
for the degree-$d$ homogeneous polynomials is given by $K_1^{a_1}
\cdots K_n^{a_n}$, where $(a_1, \dots, a_n)$ satisfy $\sum_{i=1}^n
a_i \deg K_i = d$.

So $n_d$ is the number of nonnegative integer solutions $(a_1,
\dots, a_n)$ to this equation. For each variable $ K_i $, its
exponent $ a_i $ contributes a factor $ q^{a_i \deg K_i} $ to
the generating function. Since the choices of $ a_1, \dots, a_n $
are independent, the generating function for all such solutions is
the product of the individual geometric series:
$$
\sum_{d=0}^{\infty} n_d \, q^d = \prod_{i=1}^{n} \Pa{
\sum_{a_i=0}^{\infty} q^{a_i \deg K_i}}= \prod_{i=1}^{n} \frac{1}{1
- q^{\deg K_i}}.
$$
Here the last equality is understood in the sense of formal power
series. If one interprets it as an ordinary numerical series, the
identity requires $\abs{q} < 1 $ for convergence. Therefore,
\begin{eqnarray*}
M(q) &=& \sum_{m=0}^{\infty} n_m q^m=\sum_{m=0}^{\infty}\sum_{k=0}^{r} n_{m,k}q^m=\sum_{k=0}^{r} \sum_{d=0}^{\infty}n_d q^dq^{\deg J_k}  \\
&=& \sum_{k=0}^{r} q^{\deg J_k} \prod_{i=1}^{n} \frac{1}{1 - q^{\deg
K_i}}=\frac{\sum_{k=0}^{r} q^{\deg J_k}}{\prod_{i=1}^{n} (1 -
q^{\deg K_i})},
\end{eqnarray*}
where the second equality is given by the direct decomposition $
\bbC[V]^G = \bigoplus_{k=0}^r J_k \cdot \bbC[K_1,\dots,K_n] $, the
third equality is due to $ d = m - \deg J_k $ and reorder the summation,
the fourth equality is given by the generating function of $n_d$.

\section{Molien's theorem and its proof}\label{Molien}
In this appendix we provide a self-contained proof of Molien's theorem for compact Lie groups, which is the central tool used in Section \ref{s3} to compute the Molien series for the two-qubit invariant ring.

\begin{thrm}[Molien's theorem for a compact Lie group \cite{Derksen2015}]
Let $\sfG$ be a compact Lie group equipped with a normalized Haar
measure $\dif\mu$ satisfying $\int_{\sfG} \dif\mu(g) = 1$. Given a
rational representation $\bsT : \sfG \to \GL(V)$ with $\bsT(g) =
\bsT_g \in \GL(V)$ for each $g \in \sfG$, then the Molien series
takes the integral form
$$
M_{\sfG}(q) = \int_{\sfG} \frac{\dif\mu(g)}{\det(\I_V - q\bsT_g)}.
$$
\end{thrm}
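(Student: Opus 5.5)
The plan is to compute the graded dimensions $n_m=\dim\bbC[V]^{\sfG}_m$ by averaging over the group. Each $n_m$ is then expressed as a Haar integral of a character, and the resulting series is summed in closed form under the integral sign.

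First, for each $m$ let $S^m(V^*)=\bbC[V]_m$ be the homogeneous polynomials of degree $m$. The group acts on this space by $[g\cdot f](\bsv)=f(\bsT_g^{-1}\bsv)$. Define the averaging operator
$$
\bsP_m=\int_{\sfG}\rho_m(g)\,\dif\mu(g)
$$
on $\bbC[V]_m$, where $\rho_m$ denotes this induced action. Left invariance and normalization of the Haar measure give $\rho_m(h)\bsP_m=\bsP_m$ and $\bsP_m^2=\bsP_m$. Moreover, $\bsP_m$ fixes every invariant. Hence $\bsP_m$ is a projection onto $\bbC[V]^{\sfG}_m$, and
$$
n_m=\Tr{\bsP_m}=\int_{\sfG}\Tr{\rho_m(g)}\dif\mu(g).
$$

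Second, compute the trace of $\rho_m(g)$. Since $\sfG$ is compact, each $\bsT_g$ is diagonalizable, being unitary for a suitable invariant inner product obtained by Haar averaging. Let its eigenvalues be $\lambda_1,\dots,\lambda_N$. In an eigenbasis the coordinate functions $x_i$ are eigenvectors of the action on $V^*$ with eigenvalues $\lambda_i^{-1}$. The monomials $x^{\bsa}$ of degree $m$ are therefore eigenvectors, so $\Tr{\rho_m(g)}=h_m(\lambda_1^{-1},\dots,\lambda_N^{-1})$, the complete homogeneous symmetric polynomial. Summing over $m$ for $\abs{q}<1$ gives
$$
\sum_m h_m(\lambda^{-1})q^m=\prod_i\frac{1}{1-q\lambda_i^{-1}}=\frac{1}{\det(\I_V-q\bsT_{g^{-1}})}.
$$

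Third, interchange sum and integral. Because $\abs{\lambda_i}=1$, the bound $\abs{h_m}\le\binom{m+N-1}{N-1}$ holds uniformly in $g$. The series therefore converges uniformly on $\sfG$ for $\abs{q}<1$, and dominated convergence applies. This yields $M_{\sfG}(q)=\int_{\sfG}\det(\I_V-q\bsT_{g^{-1}})^{-1}\dif\mu(g)$. Finally, the substitution $g\mapsto g^{-1}$ preserves Haar measure on a compact (unimodular) group, which gives exactly the stated formula.

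The main obstacle is the first step: checking carefully that $\bsP_m$ is an idempotent whose image is precisely the invariants, which needs both invariance properties of $\dif\mu$. The unitarizability used for the eigenvalue bound is the other point that requires care. Everything else is bookkeeping.
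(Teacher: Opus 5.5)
Your proposal is correct and follows essentially the same route as the paper's proof in Appendix~\ref{Molien}: an averaging (Reynolds) projection whose trace gives $n_m$; the trace of the induced action on degree-$m$ monomials computed in an eigenbasis, where the eigenvalues are $\lambda_i^{-1}$; summation of the geometric series to $\det(\I_V-q\bsT_{g^{-1}})^{-1}$; and a final inversion $g\mapsto g^{-1}$ of the Haar measure. Your unitarizability remark and your explicit bound $\binom{m+N-1}{N-1}$ just make precise the unit-circle eigenvalue and uniform-convergence claims that the paper asserts without detail.
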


\begin{proof}
Let $\bsT: \G \to \GL(V)$ be a continuous representation on a
finite-dimensional complex vector space $V$, with $\dim_{\bbC} V =
n$. Denote by $\bbC[V]_m$ the space of homogeneous polynomials of
degree $m$ on $V$, which is naturally identified with the $m$-th
symmetric power $\operatorname{Sym}^m(V^*)$. The group $\G$ acts on
any polynomial $f\in\bbC[V]$ by
$$
(g \cdot f)(\bsv) := f(\bsT_{g^{-1}} \bsv), \quad
\bsv \in V.
$$ This defines a left action: $(gh)\cdot f=g\cdot
(h\cdot f)$ for any $g,h\in \G$. For each $m \geqslant 0$, define
the \emph{Reynolds operator} (averaging projector)
$$
\widetilde{\cT}_m : \bbC[V]_m \to \bbC[V]_m^\G, \quad
\widetilde{\cT}_m(f) := \int_\G (g \cdot f)  \dif \mu(g).
$$
Since $\G$ is compact and the representation is continuous, the
integral converges in the finite-dimensional space $\bbC[V]_m$.
Moreover, $\widetilde{\cT}_m$ is a linear projection onto the
invariant subspace $\bbC[V]_m^\G$.
\begin{itemize}
\item For any $f \in \bbC[V]_m$ and any $h\in\G$, it holds that
\begin{eqnarray*}
h\cdot \widetilde{\cT}_m(f) = \int_\G h\cdot(g\cdot f)\,\dif\mu(g) =
\int_\G (hg)\cdot f\,\dif\mu(g) = \int_\G g'\cdot f\,\dif\mu(g') =
\widetilde{\cT}_m(f),
\end{eqnarray*}
where the third equality follows from the left-invariance of the
Haar measure via the substitution $g'=gh$. Thus the image lies in
the  invariant subspace, $ \widetilde{\cT}_m(f) \in \bbC[V]_m^\G$.
\item If $f\in\bbC[V]_m^\G$, then
$g\cdot f=f$ for all $g\in\G$, so $\widetilde{\cT}_m(f) = \int_\G
(g\cdot f) \,\dif\mu(g) =  \int_\G f \dif\mu(g) = f$. Thus
$f\in\im(\widetilde{\cT}_m)$ for any $f\in\bbC[V]^\G_m$.
\end{itemize}
These two properties imply that for any $f$, $\widetilde{\cT}_m(f)$
is invariant, and $\widetilde{\cT}_m$ acts as the identity on
invariant polynomials. Thus $\widetilde{\cT}_m$ is a projection onto
$\bbC[V]_m^\G$. Since the trace of any projection equals the
dimension of its image, then
$$
\dim \bbC[V]_m^\G = \Tr{\widetilde{\cT}_m}.
$$
Since the trace is linear and continuous, we may interchange it with
the integral (Fubini's theorem for compact groups):
$$
\Tr{\widetilde{\cT}_m} = \int_\G \Tr{\cT_g^m} \dif\mu(g),
$$
where $\cT_g^m : \bbC[V]_m \to \bbC[V]_m$ is defined by $\cT_g^m(f)
:= g\cdot f$, and $\Tr{\cT_g^m}$ denotes the trace of this linear
map on the finite-dimensional space $\bbC[V]_m$.

Now fix $g \in \G$. Let $\lambda_1, \dots, \lambda_n$ be the
eigenvalues of $\bsT_g$ on $V$. Since $\G$ is compact, $\bsT_g$ is
diagonalizable and all eigenvalues lie on the unit circle. Choose a
basis $\set{\bsv_1,\dots,\bsv_n}$ of eigenvectors of $\bsT_g$, with
$\bsT_g \bsv_i = \lambda_i \bsv_i$, and let $\set{x_1,\dots,x_n}$ be
the dual basis of $V^*$, so that $x_i(\bsv_j) = \delta_{ij}$. Then a
basis of $\bbC[V]_m$ is given by the monomials
$$
x_1^{a_1} \cdots x_n^{a_n}, \qquad a_i \geqslant 0,\quad
\sum_{i=1}^n a_i = m.
$$
For each such monomial, we have
$$
(g \cdot x_i)(\bsv) = x_i(\bsT_{g^{-1}} \bsv) = x_i\Pa{\sum_j c_j
\lambda_j^{-1} \bsv_j} = c_i \lambda_i^{-1} = \lambda_i^{-1}
x_i(\bsv),
$$
so $g$ acts on $x_i$ by multiplication by
$\lambda_i^{-1}$. Therefore,
$$
g \cdot (x_1^{a_1} \cdots x_n^{a_n}) = (\lambda_1^{-a_1} \cdots
\lambda_n^{-a_n})  x_1^{a_1} \cdots x_n^{a_n}.
$$
That is, the eigenvalues of $\cT_g^m$ are precisely all numbers of
the form $\lambda_1^{-a_1} \cdots \lambda_n^{-a_n}$, where $a_i
\geqslant 0$ and $\sum_{i=1}^n a_i = m$. Consequently,
$$
\Tr{\cT_g^m} = \sum_{\substack{a_1,\dots,a_n \geqslant 0 \\
\sum a_i= m}} \lambda_1^{-a_1} \cdots \lambda_n^{-a_n}.
$$
The Molien series is
$$
M_\G(q) := \sum_{m=0}^{\infty}
\dim\Pa{\bbC[V]_m^\G}q^m=\sum_{m=0}^{\infty}\Tr{\widetilde{\cT}_m}q^m.
$$
Using the preceding expression and interchanging summation and
integration (justified by uniform convergence for $\abs{q}<1$), we
obtain
\begin{eqnarray*}
M_\G(q) &=& \sum_{m=0}^{\infty} \Pa{\int_\G
\sum_{\substack{a_1,\dots,a_n \geqslant 0 \\ \sum a_i = m}}
\lambda_1^{-a_1} \cdots \lambda_n^{-a_n} \, \dif\mu(g)} q^m \\
&=& \int_\G \Pa{\sum_{m=0}^{\infty} \sum_{\substack{a_1,\dots,a_n
\geqslant 0
\\ \sum a_i = m}}\prod_{i=1}^n (q \lambda_i^{-1})^{a_i}} \dif\mu(g) \\
&=& \int_\G \Pa{\prod_{i=1}^n \sum_{a_i=0}^{\infty} (q \lambda_i^{-1})^{a_i}}\dif\mu(g) \\
&=& \int_\G \prod_{i=1}^n \frac{1}{1 - q \lambda_i^{-1}} \,
\dif\mu(g).
\end{eqnarray*}
Now observe that
$$
\prod_{i=1}^n (1 - q \lambda_i^{-1}) = \det\Pa{\I_V - q
\bsT_{g^{-1}}},
$$
because the eigenvalues of $\bsT_{g^{-1}}$
are precisely $\lambda_1^{-1}, \dots, \lambda_n^{-1}$. Hence
$$
\prod_{i=1}^n \frac{1}{1 - q \lambda_i^{-1}} = \det\Pa{\I_V - q
\bsT_{g^{-1}}}^{-1}.
$$
Since the Haar measure is invariant
under the inversion map $g \mapsto g^{-1}$, we have
$$
\int_\G \det\Pa{\I_V - q \bsT_{g^{-1}}}^{-1} \dif\mu(g) = \int_\G
\det\Pa{\I_V - q \bsT_g}^{-1} \dif\mu(g).
$$
Therefore, we conclude the Molien formula for a compact Lie group:
$$
M_\G(q)= \int_\G \frac{\dif\mu(g)}{\det\Pa{\I_V - q \bsT_g}}
$$
where
the integral is taken with respect to the normalized Haar measure on
$\G$. This gives the generating function for the dimensions of the
graded spaces of invariant polynomials.
\end{proof}

\section{Complete enumeration diagrams for cases (24), (26) and (31)}\label{graph}

\begin{figure}[H]  
\centering
\includegraphics[ width=1\textwidth]{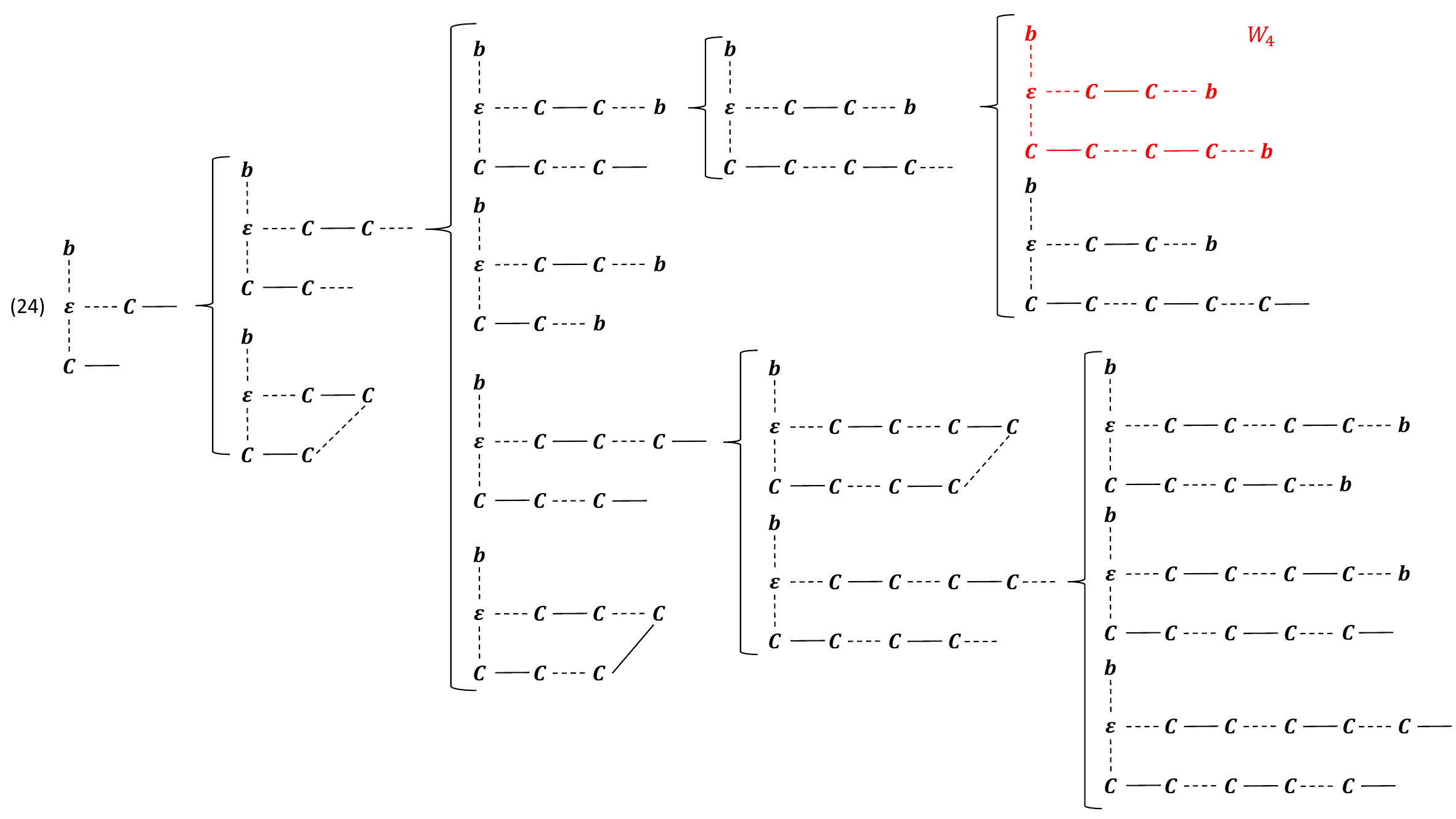}
\caption{The single candidate invariant in case (24).} \label{24}
\end{figure}

\begin{figure}[H]  
\centering
\includegraphics[trim=5 14 7 10, clip, width=1\textwidth]{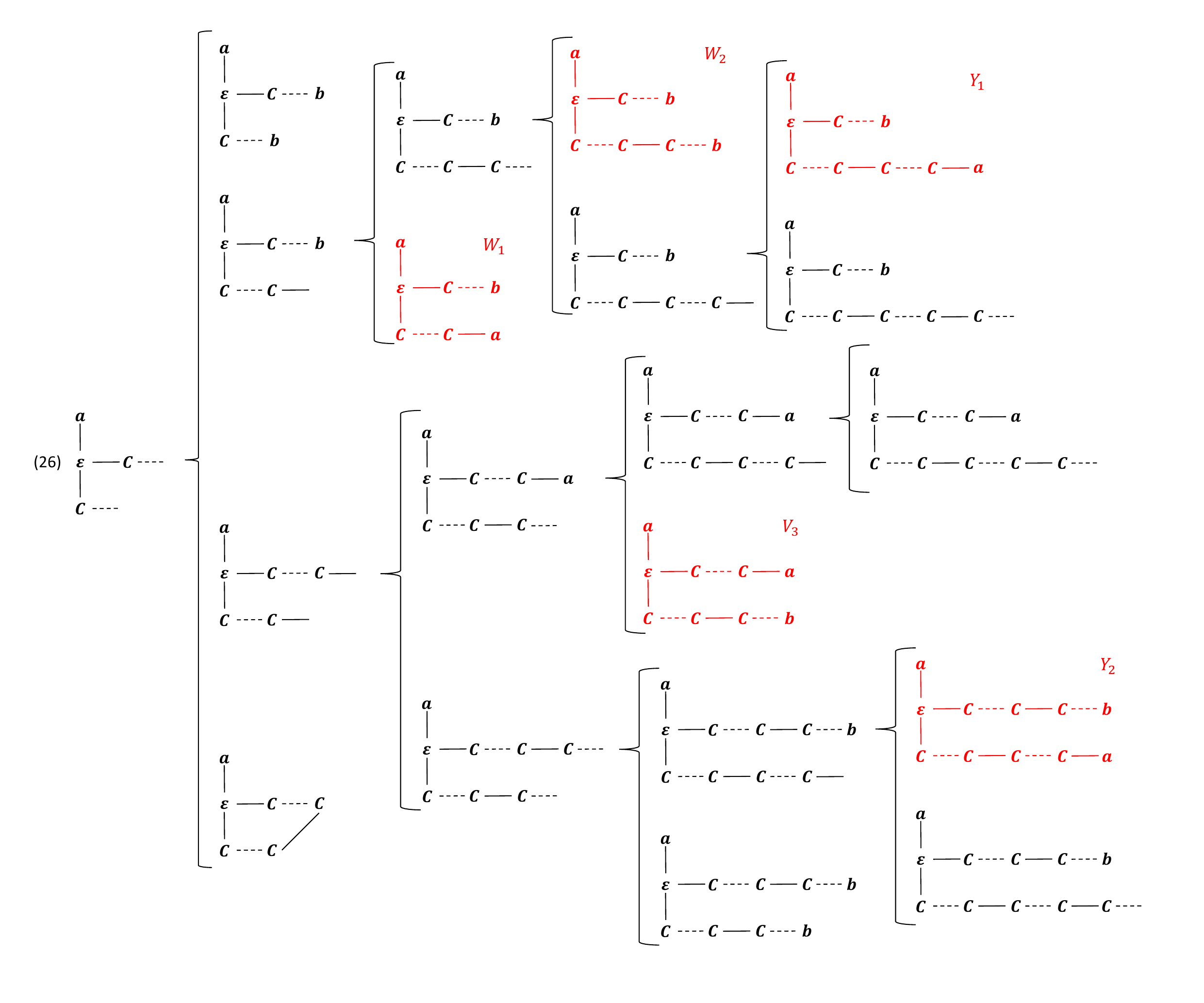}
\caption{In case (26), there are five candidate invariants.}
\label{26}
\end{figure}

\begin{figure}[H]  
\centering
\includegraphics[trim=5 14 7 10, clip, width=1\textwidth]{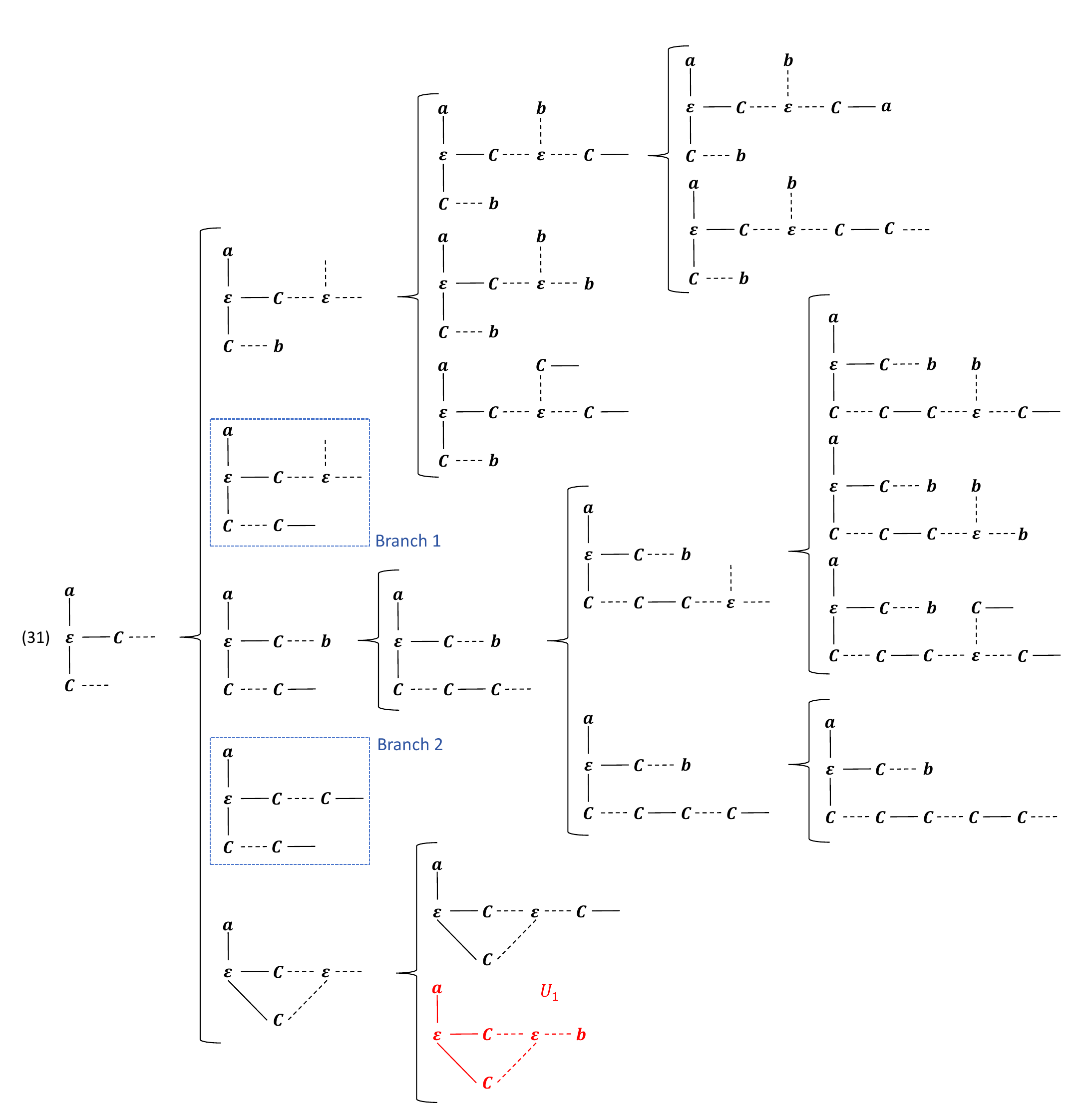}
\caption{The full enumeration diagram of the case (31). The two
subgraphs enclosed by the blue dotted rectangles in the figure are
shown in Figures \ref{31b} and \ref{31c}.} \label{31a}
\end{figure}

\begin{figure}[H]  
\centering
\includegraphics[trim=5 3 7 5, clip, width=1\textwidth]{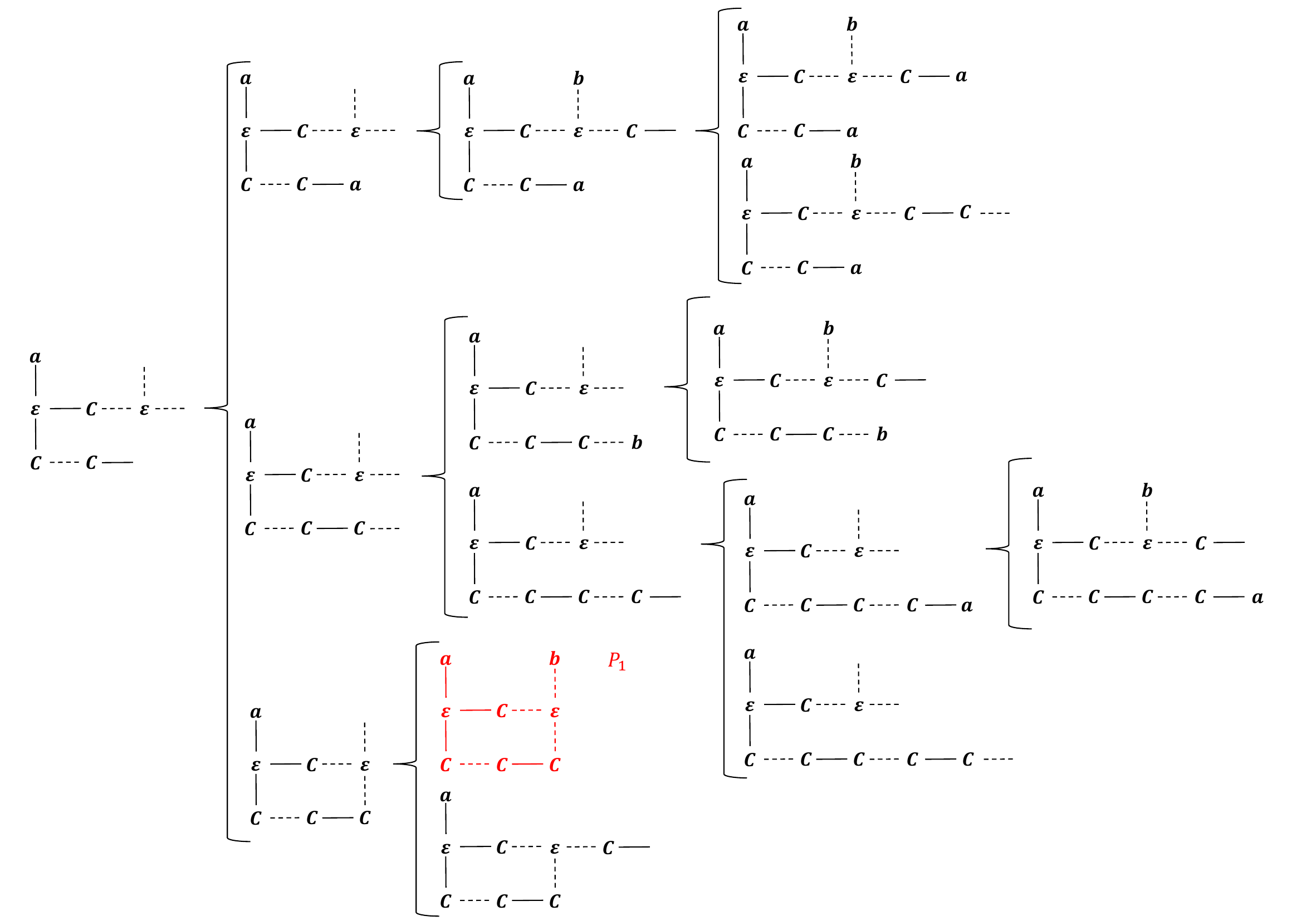}
\caption{The complete enumeration diagram of branch 1 in Figure
\ref{31a}.} \label{31b}
\end{figure}

\begin{figure}[H]  
\centering
\includegraphics[trim=5 14 7 10, clip, width=1\textwidth]{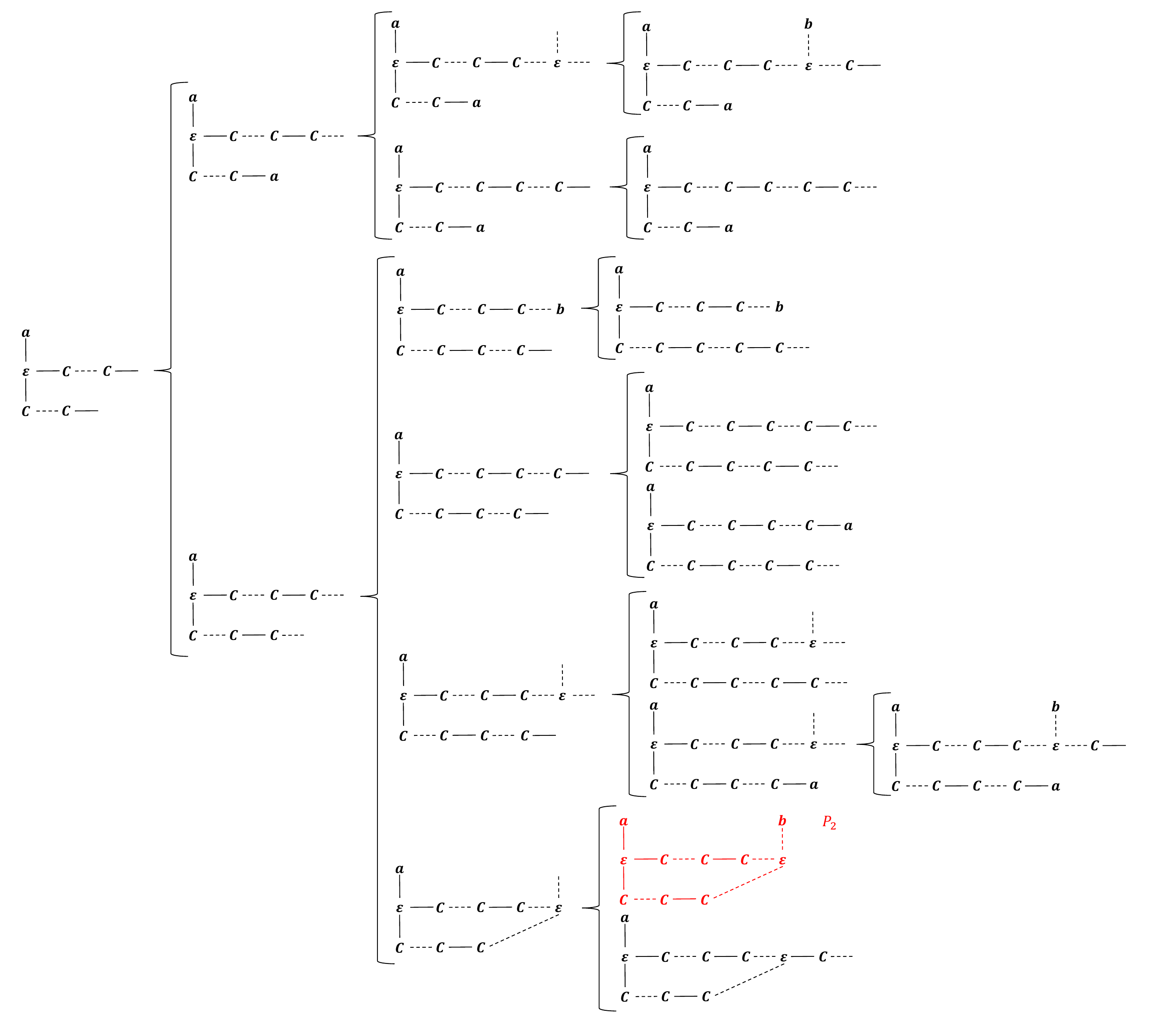}
\caption{The complete enumeration diagram of branch 2 in Figure
\ref{31a}.} \label{31c}
\end{figure}



\end{document}

%% file: qudit_preamble.tex
\newcommand{\jmp}{J. Math. Phys.~}

\newcommand{\jpa}{J. Phys. A~}

\newcommand{\pra}{Phys. Rev. A~}

\usepackage{young}
\usepackage[latin1]{inputenc}
\usepackage{appendix}
\usepackage{epstopdf}
\usepackage{xcolor}
\usepackage{subfigure}
\usepackage[T1]{fontenc}
\usepackage[sc]{mathpazo}
\usepackage{amsmath}
\usepackage{amssymb}
\usepackage{graphicx,color}
\usepackage{framed}
\usepackage{multirow}
\usepackage{enumerate}
\usepackage{amsthm}
\usepackage{amsfonts,mathrsfs}
\usepackage{geometry} 
\usepackage{float}
\usepackage{eepic}
\usepackage{booktabs}
\usepackage{dashrule}
\usepackage{ifthen}
\usepackage[vcentermath]{youngtab}
\usepackage[unicode=true,pdfusetitle, bookmarks=true,bookmarksnumbered=false,bookmarksopen=false, breaklinks=false,pdfborder={0 0 0},backref=false,colorlinks=false] {hyperref}
\hypersetup{
colorlinks,linkcolor=myurlcolor,citecolor=myurlcolor,urlcolor=myurlcolor}
\definecolor{myurlcolor}{rgb}{0,0,0.7}

\usepackage{color}

\newcommand{\blue}{\textcolor{blue}}

\usepackage{hyperref}
\hypersetup{pdfpagemode=UseNone}

\newcommand{\tinyspace}{\mspace{1mu}}

\newcommand{\op}[1]{\operatorname{#1}}

\newcommand{\abs}[1]{\left\lvert\tinyspace #1 \tinyspace\right\rvert}

\renewcommand{\det}{\operatorname{det}}
\renewcommand{\t}{{\scriptscriptstyle\mathsf{T}}}

\renewcommand{\vec}{\op{vec}}
\newcommand{\im}{\op{im}}

\def\GL{\mathsf{GL}}
\def\SL{\mathsf{SL}}
\def\SO{\mathsf{SO}}
\def\SU{\mathsf{SU}}

\def\Res{\mathrm{Res}}
\def \Span{\mathrm{Span}}

\def \dif {\mathrm{d}}
\def \diag {\mathrm{diag}}

\def \im {\mathrm{Im}}

\def\I{\mathbb{1}}

\newenvironment{mylist}[1]{\begin{list}{}{
    \setlength{\leftmargin}{#1}
    \setlength{\rightmargin}{0mm}
    \setlength{\labelsep}{2mm}
    \setlength{\labelwidth}{8mm}
    \setlength{\itemsep}{0mm}}}
    {\end{list}}

\def\ot{\otimes}

\newcommand{\inner}[2]{\langle #1 , #2\rangle}

\newcommand{\out}[2]{| #1\rangle\langle #2 |}
\newcommand{\Inner}[2]{\left\langle #1 , #2\right\rangle}
\newcommand{\innerm}[3]{\left\langle #1 \left| #2 \right| #3 \right\rangle}
\newcommand{\defeq}{\stackrel{\smash{\textnormal{\tiny def}}}{=}}

\newcommand{\End}{\mathrm{End}}

\newcommand{\pauli}{\boldsymbol{\sigma}}

\newcommand{\pa}[1]{(#1)}
\newcommand{\Pa}[1]{\left(#1\right)}

\newcommand{\Br}[1]{\left[#1\right]}
\newcommand{\set}[1]{\{#1\}}
\newcommand{\Set}[1]{\left\{#1\right\}}

\newcommand{\ket}[1]{|#1\rangle}

\DeclareMathOperator{\trace}{Tr}
\newcommand{\ptr}[2]{\trace_{#1}\pa{#2}}
\newcommand{\Ptr}[2]{\trace_{#1}\Pa{#2}}
\newcommand{\tr}[1]{\ptr{}{#1}}
\newcommand{\Tr}[1]{\Ptr{}{#1}}

\newcommand{\Abs}[1]{\left|\tinyspace#1\tinyspace\right|}

\def\cF{\mathcal{F}}\def\cH{\mathcal{H}}

\def\cR{\mathcal{R}}\def\cT{\mathcal{T}}
\def\cX{\mathcal{X}}

\def\bsA{\boldsymbol{A}}\def\bsB{\boldsymbol{B}}\def\bsC{\boldsymbol{C}}
\def\bsF{\boldsymbol{F}}
\def\bsM{\boldsymbol{M}}\def\bsN{\boldsymbol{N}}\def\bsO{\boldsymbol{O}}
\def\bsS{\boldsymbol{S}}\def\bsT{\boldsymbol{T}}
\def\bsX{\boldsymbol{X}}\def\bsY{\boldsymbol{Y}}

\def\bsa{\boldsymbol{a}}\def\bsb{\boldsymbol{b}}\def\bse{\boldsymbol{e}}

\def\bsu{\boldsymbol{u}}\def\bsv{\boldsymbol{v}}\def\bsw{\boldsymbol{w}}\def\bsx{\boldsymbol{x}}\def\bsy{\boldsymbol{y}}
\def\bsz{\boldsymbol{z}}

\def\D{\textsf{D}}
\def\G{\textsf{G}}

\def\bbC{\mathbb{C}}

\def\bbR{\mathbb{R}}

\def\sfG{\mathsf{G}}

\def\sfU{\mathsf{U}}

\newtheorem{thrm}{Theorem}[section]
\newtheorem{lem}[thrm]{Lemma}
\newtheorem{prop}[thrm]{Proposition}
\newtheorem{cor}[thrm]{Corollary}
\theoremstyle{definition}
\newtheorem{definition}[thrm]{Definition}

\numberwithin{equation}{section}

\newcounter{questionnumber}